\documentclass[%
reprint,
superscriptaddress,
showpacs,preprintnumbers,
 amsmath,
 amssymb,
aps,
pre,
floatfix,
]{revtex4-1}

\usepackage{microtype}
\usepackage{graphicx}
\usepackage{dcolumn}
\usepackage{bm}
\usepackage[colorlinks,allcolors=blue]{hyperref}

\newcommand{\dpar}[1]{\left(#1\right)}

\usepackage{amsmath,amssymb,amsthm}

\newtheorem*{thm}{Theorem}
\begin{document}
\title{Characterizing Time Series via Complexity-Entropy Curves}

\author{Haroldo V. Ribeiro}\email{hvr@dfi.uem.br}
\author{Max Jauregui}
\affiliation{Departamento de F\'isica, Universidade Estadual de Maring\'a, Maring\'a, PR 87020-900, Brazil}
\author{Luciano Zunino}
\affiliation{Centro de Investigaciones \'Opticas (CONICET La Plata - CIC), C.C. 3, 1897 Gonnet, Argentina}
\affiliation{Departamento de Ciencias B\'asicas, Facultad de Ingenier\'ia, Universidad Nacional de La Plata (UNLP), 1900 La Plata, Argentina}
\author{Ervin K. Lenzi}
\affiliation{Departamento de F\'isica, Universidade Estadual de Ponta Grossa, Ponta Grossa, PR 84030-900, Brazil}
\date{\today}

\begin{abstract}
The search for patterns in time series is a very common task when dealing with complex systems. This is usually accomplished by employing a complexity measure such as entropies and fractal dimensions. However, such measures usually only capture a single aspect of the system dynamics. Here, we propose a family of complexity measures for time series based on a generalization of the complexity-entropy causality plane. By replacing the Shannon entropy by a mono-parametric entropy (Tsallis $q$-entropy) and after considering the proper generalization of the statistical complexity ($q$-complexity), we build up a parametric curve (the $q$-complexity-entropy curve) that is used for characterizing/classifying time series. Based on simple exact results and numerical simulations of stochastic processes, we show that these curves can distinguish among different long-range, short-range and oscillating correlated behaviors. Also, we verify that simulated chaotic and stochastic time series can be distinguished based on whether these curves are open or closed. We further test this technique in experimental scenarios related to chaotic laser intensity, stock price, sunspot, and geomagnetic dynamics, confirming its usefulness. Finally, we prove that these curves enhance the automatic classification of time series with long-range correlations and interbeat intervals of healthy subjects and patients with heart disease.
\end{abstract}

\pacs{05.45.-a, 05.40.-a, 89.70.Cf, 05.45.Tp}
\maketitle
\section{Introduction}
The study of complex systems often shares the goal of analyzing empirical time series aiming to extract patterns or laws that rule the system dynamics. In order to perform this task, it is very common to employ a complexity measure such as algorithmic complexity~\cite{Kolmogorov1965}, entropies~\cite{Shannon1948}, relative entropies~\cite{KullbackLeibler1951}, fractal dimensions~\cite{Mandelbrot}, and Lyapunov exponents~\cite{Lyapunov}. Researchers have actually defined several complexity measures, a fact directly related to the difficulty of accurately defining the meaning of complexity. However, the majority of the available measures depends on specific algorithms and tuning parameters, which usually creates great difficulties for research reproducibility. 

To overcome this problem, Bandt and Pompe~\cite{BandtPompe2002} have introduced a complexity measure based on the comparison of neighboring values that can be easily applied to any time series. The application of this technique (the permutation entropy) is widely spread over the scientific community~\cite{PhysRevLett.116.033902,Lin2016128,PhysRevE.91.023101,PhysRevE.89.012905,PhysRevE.85.021906,srep07784,srep01778} mainly because of its simplicity and ability to distinguish among regular, chaotic and random time series. Regarding the particular issue of distinguishing between chaotic and stochastic processes, Rosso~\textit{et al.}~\cite{RossoLarrondoMartinPlastinoFuentes2007} have shown that the permutation entropy alone is not enough for accomplishing this task. They observed, for instance, that the value of the permutation entropy calculated for the logistic map at fully developed chaos is very close to the value obtained for long-range correlated noises. Because of that, Rosso~\textit{et al.}~\cite{RossoLarrondoMartinPlastinoFuentes2007} have employed the ideas of Bandt and Pompe together with a diagram proposed by L\'opez-Ruiz \textit{et al.}~\cite{LopezManciniCalbet1995}. This diagram is composed of the values of a relative entropic measure (the statistical complexity) versus the Shannon entropy, both calculated within the framework of Bandt and Pompe. Rosso~\textit{et al.} have named this diagram as complexity-entropy causality plane, and by using it they were able to distinguish between several time series of stochastic and chaotic nature. The causality plane has also proved its usefulness in several applications~\cite{PhysRevE.91.023101,PhysRevE.89.012905,srep07784,Jovanovic2016,Stosic20161136,ribeiro2012complexity} and has been generalized for considering higher dimensional data~\cite{Ribeiro_etal2012}, different time~\cite{PhysRevE.86.046210} and spatial resolutions~\cite{Zunino2016679}. 

Here, we propose to extend the causality plane for considering a mono-parametric entropy in replacement of the Shannon entropy. In particular, we have considered the Tsallis $q$-entropy~\cite{Tsallis1988,Tsallis} (that recovers the Shannon entropy for $q=1$) together with the proper generalization of the statistical complexity~\cite{MartinPlastinoRosso2006} ($q$-complexity). The values of the parameter $q$ in the Tsallis entropy give different weights to the underlying probabilities of the system, accessing different dynamical scales and producing a family of complexity measures that capture some of the different meanings of complexity. Moreover, the Tsallis $q$-entropy has already proved to be useful for enhancing the performance of computational techniques such as in optimization problems~\cite{StarioloTsallis1995,TsallisStariolo1996,AndricioaeiStraub1997} and image thresholding~\cite{Albuquerque_etal2004,JalabIbrahimAhmed2016} as well as has been previously implemented for characterizing fractal stochastic processes~\cite{zunino2008fractional}.


Thus, for a given time series, we build up a parametric curve composed of the values of the $q$-complexity versus the $q$-entropy, which we will call the \emph{$q$-complexity-entropy curve}. Based on simple exact results, we discuss some general properties of these curves, and next we present an exhaustive list of applications based on numerical simulations and empirical data. These applications show that the $q$-complexity-entropy curve can capture dynamical aspects of time series that are not properly identified only by the point for $q=1$, which corresponds to the complexity-entropy causality plane of Rosso~\textit{et al.}. The rest of this article is organized as follows. Section~2 is devoted for reviewing the Bandt and Pompe approach and the complexity-entropy causality plane of Rosso~\textit{et al.}. Also in this section, we present our generalization for considering the Tsallis $q$-entropy as well as some general properties of the $q$-complexity-entropy curve. Section~3 presents our numerical experiments with time series from the fractional Brownian motion, harmonic noise, and chaotic maps. In Section~4, we discuss some real world applications involving time series from laser dynamics, sunspot numbers, stock prices, human heart rate, and Earth's magnetic activity. Section~5 ends this paper with some concluding remarks.

\section{Generalized entropy and complexity measures within the Bandt and Pompe framework}
We start by reviewing the approach of Bandt and Pompe~\cite{BandtPompe2002} for extracting the probabilities related to the ordinal dynamics of the elements of a time series. For a given time series $\{x_i\}_{i=1,\dots,n}$, we construct $(n-d+1)$ overlapping partitions of length $d>1$ represented by
\begin{equation}
s\to\{x_{s-(d-1)},x_{s-(d-2)},\dots,x_{s}\}\,,
\end{equation}
where $s=d,d+1,\dots,n$. For each $s$, we evaluate the permutations $\pi_j=\{r_0,r_1,\ldots,r_{d-1}\}$ of $\{0,1,\dots,d-1\}$ defined by the ordering $x_{s-r_{d-1}}\le x_{s-r_{d-2}}\le\ldots\le x_{s-r_{0}}$, and we associate to each permutation $\pi_j$ (with $j=1,\ldots,d!$) the probability
\begin{equation}
p_j(\pi_j) = \frac{\text{the number of}~s~\text{that has type}~\pi_j}{n-d+1}\,.
\end{equation}
The components of the probability distribution \mbox{$P=\{p_j(\pi_j)\}_{j=1,\ldots,d!}$} represent the odds of finding a segment of length $d>1$ within the time series in a given order. For instance, for the time series $\{2,4,3,5\}$, we can create tree partitions of size $d=2$: $(s=2)\to\{2,4\}$, $(s=3)\to\{4,3\}$, and $(s=4)\to\{3,5\}$. For each one, we associate the permutations $\{0,1\}$, $\{1,0\}$ and $\{0,1\}$, respectively; consequently, the probability distribution is $P=\{2/3,1/3\}$. Thus, the probability distribution $P=\{p_j(\pi_j)\}_{j=1,\ldots,d!}$ provides information about the ordering dynamics for a given time scale defined by the value of $d$, often called the embedding dimension. In the Bandt and Pompe framework, $d$ is a parameter whose value must satisfy the condition $n\gg d!$ in order to obtain reliable statistics for all $d\,!$ possible permutations occurring in the time series.

Given the probability distribution $P=\{p_j(\pi_j)\}_{j=1,\ldots,d!}$, Bandt and Pompe proposed to employ the normalized Shannon entropy
\begin{equation}\label{eq:shannon}
H_1(P) = \frac{S_{1}(P)}{S_{1}(U)}\,,
\end{equation}
where $S_1(P)=\sum_{j=1}^{d!} p_j \log \frac{1}{p_j}$ is the Shannon entropy and $U=\{1/d\,!\}_{j=1, 2, \dots d!}$ is the uniform distribution (so $S_{1}(U) = \log d\,!$),
as a natural measure of complexity. By following Bandt and Pompe's idea together with the diagram of L\'opez-Ruiz \textit{et al.}~\cite{LopezManciniCalbet1995}, Rosso~\textit{et al.}~\cite{RossoLarrondoMartinPlastinoFuentes2007} have proposed to further calculate a second complexity measure defined by
\begin{equation}\label{eq:statcomplexity}
C_1(P)=\frac{D_1(P,U) H_1(P)}{D_1^*}\,,
\end{equation}
where $D_1(P,U)$ is a relative entropic measure (the Jensen-Shannon divergence) between the empirical distribution $P=\{p_j(\pi_j)\}_{j=1,\ldots,d!}$ and the uniform distribution $U=\{1/d!\}_{j=1,\ldots,d!}$. This relative measure can be defined in terms of the symmetrized Kullback-Leibler divergence ($K_1(P|R)=-\sum p_i \log r_i/p_i$, with $P$ and $R$ probability distributions) and is written as
\begin{equation}
\begin{split}
D_1(P,U) &= \frac{1}{2}K_1\left(P\biggl|\frac{P+U}{2}\right) + \frac{1}{2}K_1\left(U\biggl|\frac{P+U}{2}\right)\\
&=\left[ S_1\left(\frac{P+U}{2}\right) - \frac{S_1(P)}{2} - \frac{S_1(U)}{2} \right]\,,
\end{split}
\end{equation}
with $\frac{P+U}{2} = \{\frac{p_j(\pi_j)+(1/d!)}{2}\}_{j=1,\ldots,d!}$; while
\begin{equation}
\begin{split}
D_1^* &=\max_{P}D_1(P,U)\\
&=-\frac{1}{2}\left[\frac{d\,!+1}{d\,!}\log(d\,!+1) - \log d\,! - 2 \log 2\right]\,,
\end{split}
\end{equation}
is a normalization constant (obtained by calculating $D_1(P,U)$ when one component of $P$ is one and all others are zero). In spite of the fact that the statistical complexity $C_1(P)$ is defined by the product of $D_1(P,U)$ and $H_1(P)$, $C_1(P)$ is not a trivial function of $H_1(P)$ in the sense that, for a given value of $H_1(P)$, there is a range of possible values for $C_1(P)$~\cite{MartinPlastinoRosso2006}. Because of that Rosso~\textit{et al.}~\cite{RossoLarrondoMartinPlastinoFuentes2007} proposed a representation space composed of the values of $C_1(P)$ versus $H_1(P)$, building up the complexity-entropy causality plane and finding that chaotic and stochastic time series occupy different regions of this diagram.

Despite being successfully applied for studying several systems, the values of $C_1(P)$ and $H_1(P)$ are not enough for capturing different scales of the system dynamics as well as different meanings for complexity. Because of that, we propose to replace the normalized Shannon entropy (Eq.~\ref{eq:shannon}) and the statistical complexity (Eq.~\ref{eq:statcomplexity}) by mono-parametric generalizations based on the Tsallis $q$-entropy. This entropic form is a generalization of the Shannon entropy and can be defined as~\cite{Tsallis1988,Tsallis}
\begin{equation}\label{eq:tsallisentropy}
S_q(P)=\sum_{j=1}^{d!} p_j\log_q\frac{1}{p_j}\,,
\end{equation}
where $q$ is a real parameter and $\log_qx=\int_1^xt^{-q}\,dt$ is the \emph{$q$-logarithm} ($\log_qx=\frac{x^{1-q}-1}{1-q}$ if $q\ne 1$ and $\log_1x=\log x$ for any $x>0$)~\cite{Tsallis}. We will use the convention $0\log_q(1/0)=0$ whenever $q>0$. It is worth noting that $S_1$ is the Shannon entropy.

Once defined the $q$-entropy, we further consider its normalized version (analogously to the Eq.~\ref{eq:shannon}):
\begin{equation}\label{eq:tsallisentropynormal}
H_q(P)=\frac{S_q(P)}{S_q(U)}\,,
\end{equation}
where $S_q(U)=\log_q d!$ is the maximum value of the $q$-entropy~\cite{Tsallis}. Furthermore, by following the developments of Martin, Plastino and Rosso~\cite{MartinPlastinoRosso2006}, we assume the generalized version of the statistical complexity (Eq.~\ref{eq:statcomplexity}) -- the $q$-complexity -- to be
\begin{equation}\label{eq:qstatcomplexity}
C_q(P)=\frac{D_q(P,U)H_q(P)}{D_q^*}\,,
\end{equation}
where
\begin{equation}
\begin{split}
D_q(P,U)&=\frac{1}{2}K_q\!\left(\!P\biggl|\frac{P+U}{2}\right) + \frac{1}{2}K_q\!\left(\!U\biggl|\frac{P+U}{2}\right)\\
&=-\frac{1}{2}\sum_{\substack{i=1\\p_i\ne 0}}^{d!}p_i\log_q\frac{p_i+1/d!}{2p_i}\\
&\quad-\frac{1}{2}\sum_{i=1}^{d!}\frac{1}{d!}\log_q\frac{p_i+1/d!}{2/d!}
\label{Dq}
\end{split}
\end{equation}
is a distance between $P$ and $U$, {$K_q(P|R)=-\sum p_i \log_q r_i/p_i$} a generalization of Kullback-Leibler divergence within the Tsallis formalism~\cite{MartinPlastinoRosso2006}, and
\begin{equation}
\begin{split}
D_q^*&=\max_{P}D_q(P,U)\\
&=\frac{2^{2-q}d!-(1+d!)^{1-q}-d!(1+1/d!)^{1-q}-d!+1}{(1-q)2^{2-q}d!}
\end{split}
\label{Dq*}
\end{equation}
is a normalization constant. 

Thus, the quantities $H_q$ and $C_q$ are generalizations of the normalized entropy and complexity within the symbolic approach of Bandt and Pompe~\cite{BandtPompe2002}, introduced by Rosso~\textit{et al.}~\cite{RossoLarrondoMartinPlastinoFuentes2007}, which are included as the particular case~$q=1$. Here, we are interested in the parametric representation of the ordered pairs $(H_q(P),C_q(P))$ on $q>0$ for a fixed distribution $P$. We call this curve the \emph{$q$-complexity-entropy curve}, and we shall see that this representation has superior capabilities of distinguishing time series when compared with the point $(H_1(P),C_1(P))$ in the complexity-entropy causality plane.

Before we proceed to the applications, let us enumerate some general properties of the $q$-complexity-entropy curves. For a given probability distribution $P=\{p_j(\pi_j)\}_{j=1,\ldots,d!}$, let $r$ be the number of non-zero components of $P$ (that is, the number of permutations $\pi_j$ that actually occurs in the time series) and $\gamma=\frac{r-1}{d!-1}$ (essentially the fraction of occurring permutations among all $d!$ possible). From the definitions of $H_q$ and $C_q$, it is not difficult to prove the following statements (see Appendix~\ref{app:Limiting_expression}):
\begin{enumerate}
\item If $r=1$ then $H_q(P)=0$ and $C_q(P)=0$ for any $q>0$;
\item $H_q(P)\to \gamma$ and $C_q(P)\to \gamma(1-\gamma)$ as $q\to 0^+$;
\item If $r>1$ then $H_q(P)\to 1$ and $C_q(P)\to 1-\gamma$ as $q\to\infty$.
\end{enumerate}
These general properties of $H_q$ and $C_q$ have the following consequences for the $q$-complexity-entropy curves:
\begin{enumerate}
\item The $q$-complexity-entropy curve of a time series that only displays one permutation $\pi_j$ (that is, for $r=1$) collapses onto the point~$(0,0)$;
\item For a time series that has all possible permutations $\pi_j$ (that is, $r=d!$ and $\gamma=1$), the $q$-complexity-entropy curve is a loop that starts at the point $(1,0)$ for $q=0^+$ and ends at the same point for $q\to\infty$;
\item For a time series that does not display all permutations $\pi_j$, the $q$-complexity-entropy curve starts at the point $(\gamma,\gamma(1-\gamma))$ for $q=0^+$ and ends at the point $(1,1-\gamma)$ for $q\to\infty$. Here $0<\gamma<1$, and the number of occurring permutations $r$ can be obtained from $\gamma$ via $r=(d!-1)\gamma+1$.
\end{enumerate}

We shall see that noisy time series are usually characterized by closed $q$-complexity-entropy curves, whereas chaotic time series have open curves (especially for large embedding dimensions). This last feature is related to the existence of forbidden ordinal patterns in the chaotic dynamics that is common in several chaotic maps~\cite{Amigo2006,Amigo2007,Amigo2008,Amigo}, but that can also appear in stochastic processes depending on the time series length~\cite{Rosso2012,Rosso201242,Carpi20102020}.

\section{Numerical applications}
In this section, we present several applications of the $q$-complexity-entropy curve for numerically-generated time series of stochastic and chaotic nature. 

\subsection{Fractional Brownian motion}
As a first application, we study time series generated from the fractional Brownian motion~\cite{Mandelbrot}. The fractional Brownian motion is a stochastic process that has stationary, long-range correlated, and Gaussian increments. It is usually defined in terms of a parameter $h$ (the so-called Hurst exponent): for $h<1/2$, the fractional Brownian motion is anti-persistent, meaning that positive increments are followed by negative increments (or vice versa) more frequently than by chance; while for $h>1/2$, it is persistent, meaning that positive increments are followed by positive increments and negative increments are followed by negative increments more frequently than by chance. Also, we have fully persistent motion in the limit of $h\to1$, whereas the usual Brownian motion is recovered in the limit of $h\to1/2$.

\begin{figure}[!ht]
\centering
\includegraphics[scale=0.35]{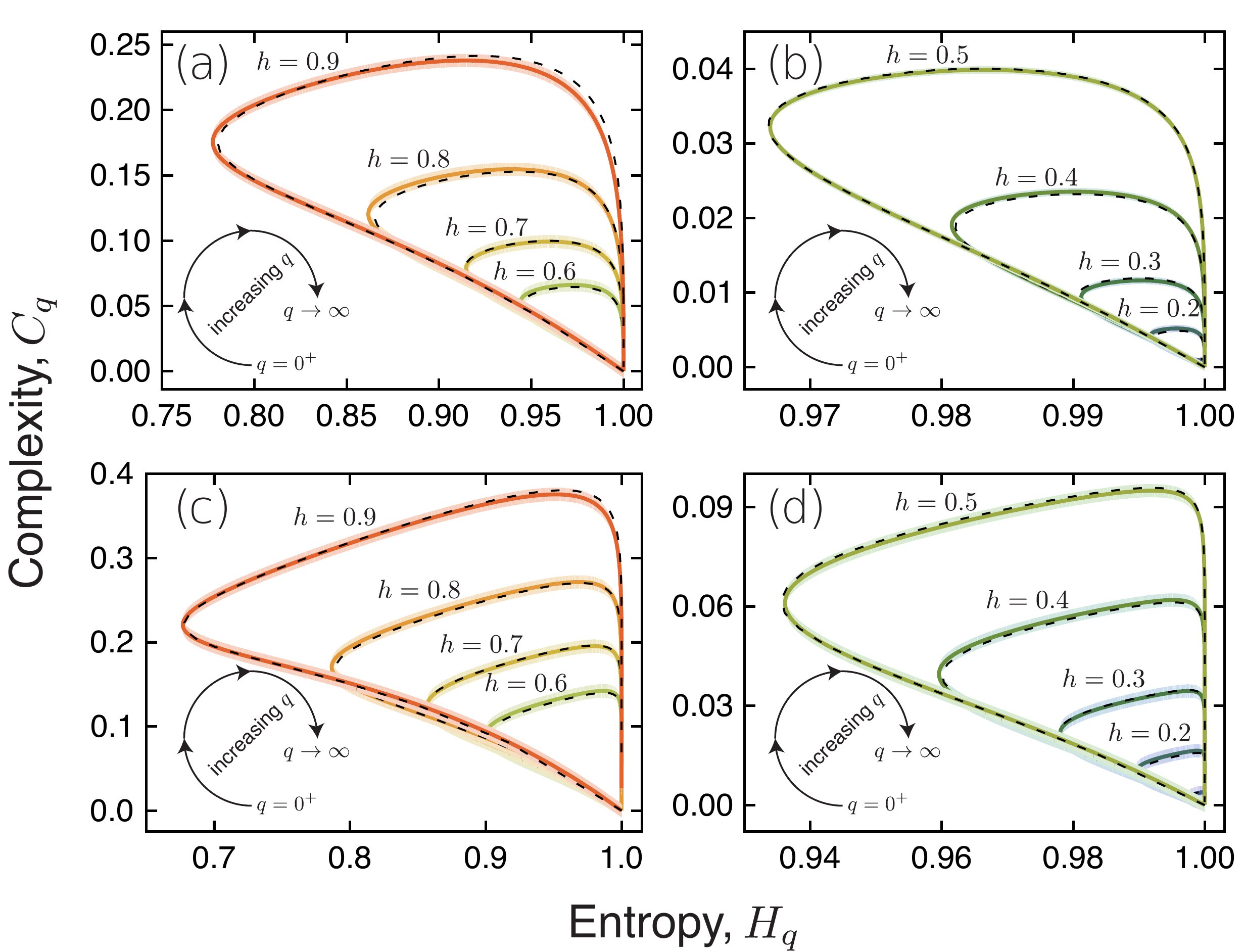}
\caption{
{Dependence of the entropy $H_q$ and complexity $C_q$ on the parameter $q$ for the fractional Brownian motion.} Panels (a) and (b) show the $q$-complexity-entropy curves with $d=3$ and several values of the Hurst exponent $h$ ($h$ in $(0.2, 0.3, 0.4, 0.5, 0.6, 0.7, 0.8, 0.9)$, as indicated in the plots). The values of $q$ are increasing (from $q=0^+$ to $q=1000$ in size steps of $10^{-4}$) in the clockwise direction. Panels (c) and (d) show the same, but with embedding dimension $d=4$. All colored curves represent the average value of $H_q$ and $C_q$ over one hundred realizations of a fractional Brownian walker with $2^{17}$ steps. The shaded areas indicate the 95\% confidence intervals estimated via bootstrapping (over one hundred independent realizations). The dashed lines are the exact results calculated by using the probabilities of Bandt and Shiha~\cite{BandtShiha2007} (see also Appendix~\ref{app:ordinal_probabilities}). It is worth noting that all curves form loops in this representation space, starting at $(1,0)$ for $q=0^+$ and ending at $(1,0)$ for $q\to\infty$.
}
\label{fig:1}
\end{figure}

In order to calculate the $q$-complexity-entropy curves associated with the fractional Brownian motion, we numerically generate time series of length $2^{17}$ following the procedure of Hosking~\cite{Hosking} for different values of the Hurst exponent $h$. Figures~\ref{fig:1}(a) and~\ref{fig:1}(b) show these curves for the embedding dimension $d=3$ and $h$ in $(0.2,0.3,\dots,0.9)$, while Figs.~\ref{fig:1}(c) and~\ref{fig:1}(d) are the same for $d=4$. These plots show the average values (over one hundred realizations) of the ordered pairs $(H_q(P),C_q(P))$, with $q$ from $10^{-4}$ (assumed to be $q=0^+$) to $1000$ in steps of $10^{-4}$. We note that all $q$-complexity-entropy curves are closed, indicating that time series of length $2^{17}$ of the fractional Brownian motion displays all possible permutations $\pi_j$ for $d=3$ and $d=4$. The presence of forbidden ordinal patterns in the fractional Brownian motion was studied by Rosso~\textit{et al.}~\cite{Rosso2012,Rosso201242} and Carpi~\textit{et al.}~\cite{Carpi20102020}, where they observed that the number of forbidden ordinal patterns decreases with the time series length with a rate that depends on the Hurst exponent $h$. In particular, Carpi~\textit{et al.}~\cite{Carpi20102020} showed that for $d=4$ and very small time series (around one hundred steps), the fractional Brownian motion may have a few number of forbidden patterns; however, this number vanishes for series of length larger than 500 terms, which agrees with our findings. Also, for the fractional Brownian motion is possible to obtain the exact expression for the $q$-complexity-entropy curves, because Bandt and Shiha~\cite{BandtShiha2007} have calculated the exact form of the probability distribution $P=\{p_j(\pi_j)\}_{j=1,\ldots,d!}$ for $d=3$ and $d=4$ (all values of $p_j(\pi_j)$ are provided in the Appendix~\ref{app:ordinal_probabilities}). By using these distributions, the expressions~(\ref{eq:tsallisentropynormal}) and~(\ref{eq:qstatcomplexity}) lead to the exact values for the ordered pairs $(H_q(P),C_q(P))$ for all $q$. The dashed lines in Fig.~\ref{fig:1} show the exact $q$-complexity-entropy curves for the fractional Brownian motion, where we observe an excellent agreement with the numerical results.

The results of Fig.~\ref{fig:1} also reveal that the $q$-complexity-entropy curves distinguish the different values of the Hurst exponent $h$. We observe that the larger the value of $h$, the broader the loop formed by the $q$-complexity-entropy curve. We also find that the normalized entropy $H_q$ as a function of $q$ has a minimum value at $q=q_H^*$ and that the complexity $C_q$ as a function of $q$ has a maximum value at $q=q_C^*$, both extreme values of $q$ depend on the Hurst exponent $h$ and also on the embedding dimension $d$. This dependence is shown in Fig.~\ref{fig:2} for $d=3$ and $d=4$, where a good agreement between the exact and the numerical values of these extreme values is observed. We further notice that $q_H^*$ increases with $h$ up to a maximum and then starts to decrease [Figs.~\ref{fig:2}(a)~and~\ref{fig:2}(c)]; whereas $q_C^*$ is a monotonically increasing function of the Hurst exponent $h$ [Figs.~\ref{fig:2}(b)~and~\ref{fig:2}(d)]. 

\begin{figure}[!ht]
\centering
\includegraphics[scale=0.33]{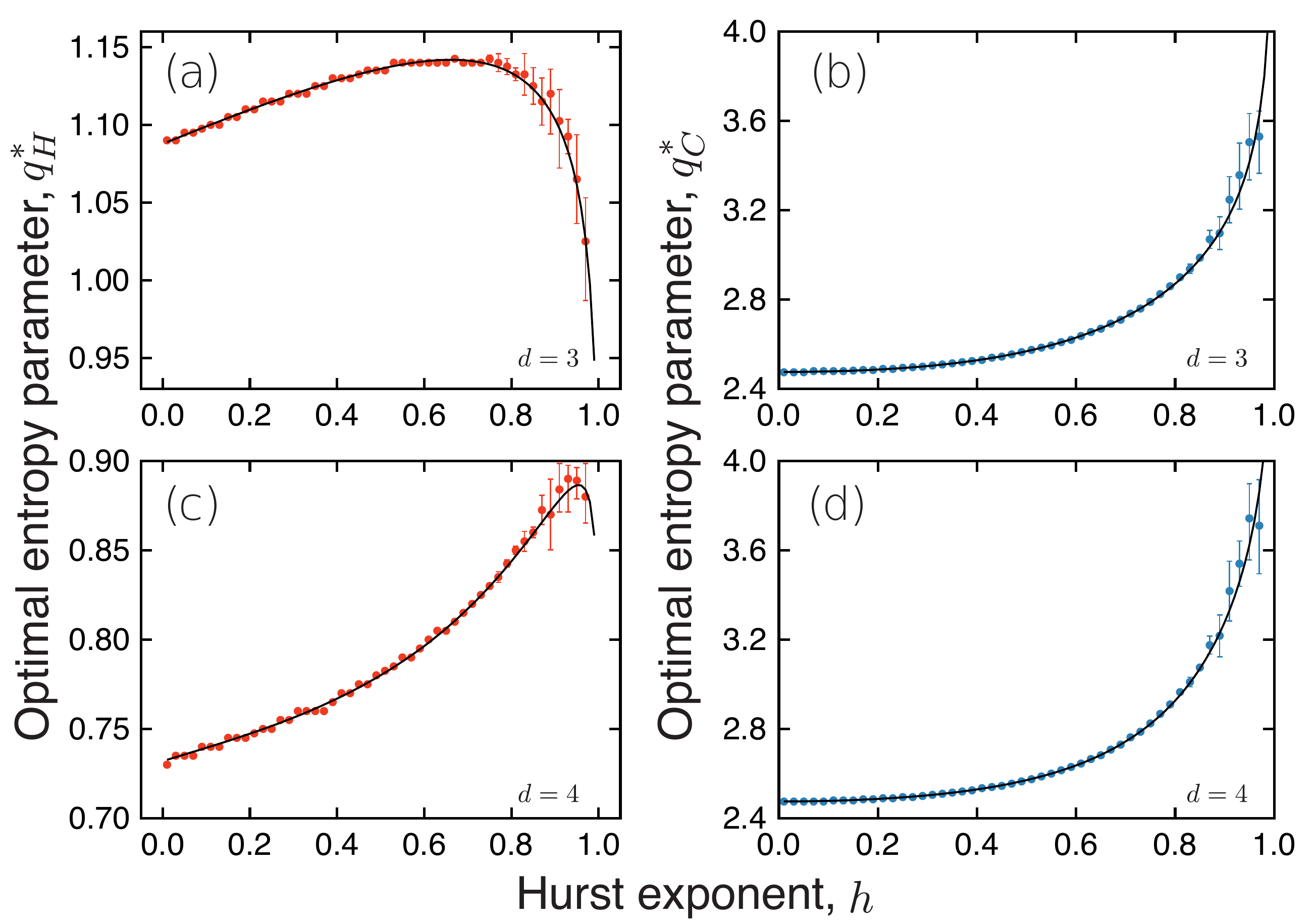}
\caption{
{Comparison between the extreme values $q^*_H$ and $q^*_C$ obtained from the simulations and the exact results for the fractional Brownian motion.} Panel (a) shows the values of $q=q^*_H$ for which $H_q$ reaches a minimum as a function of the Hurst exponent $h$ and $d=3$. Each dot corresponds to the average value of $q^*_H$ obtained from one hundred realizations of a fractional Brownian walker with $2^{17}$ steps. Error bars stand for 95\% confidence intervals estimated via bootstrapping (over one hundred independent realizations). Panel (b) shows the values of $q=q^*_C$ for which $C_q$ reaches a maximum as a function of the Hurst exponent $h$ and $d=3$. Again, the dots are the average value calculated from one hundred independent realizations of a fractional Brownian walker with $2^{17}$ steps, and the error bars are 95\% confidence intervals. In both plots, the continuous lines are the exact results. Panels (c) and (d) are the analogous of (a) and (b) when considering the embedding dimension $d=4$. In all cases, we note an excellent agreement between the simulations and the exact results.
}
\label{fig:2}
\end{figure}

The extreme values of the normalized $q$-entropy and the $q$-complexity-entropy ($H_{q_H^*}$ and $C_{q_C^*}$) represent the largest contrast between $H_{q}$ (as well as $C_{q}$) calculated for the system distribution $P=\{p_j(\pi_j)\}_{j=1,\ldots,d!}$ and the uniform distribution $U=\{1/d\,!\}_{j=1, \dots d!}$. In the context of ecological diversity, the values of $H_{q_H^*}$ were found to enhance contrast among ecological communities when compared with usual diversity indexes~\cite{Mendes_etal2008}. Similarly, the values of $H_{q_H^*}$ and $C_{q_C^*}$ may enhance the differentiation among time series of the fractional Brownian motion with different Hurst exponents. In order to verify this hypothesis, we test the performance of the values $H_{q_H^*}$ and $C_{q_C^*}$ (in comparison with $H_{1}$ and $C_{1}$) for classifying time series of the fractional Brownian motion with different Hurst exponents. For this, we generate an ensemble with one hundred time series for each value of the Hurst exponent $h$ in $(0.03, 0.05,\dots, 0.97)$. Next, we train a $k$-nearest neighbors algorithm~\cite{Pang} in a $3$-fold cross-validation strategy, considering these $48$ different values of $h$ as possible classes for the algorithm. This machine learning classifier is one of the simplest algorithms for supervised learning~\cite{Pang}; it basically assigns a class (here the value of the Hurst exponent) to an unlabeled point based on the class of the majority of the nearby points. 

\begin{figure}[!ht]
\centering
\includegraphics[scale=0.23]{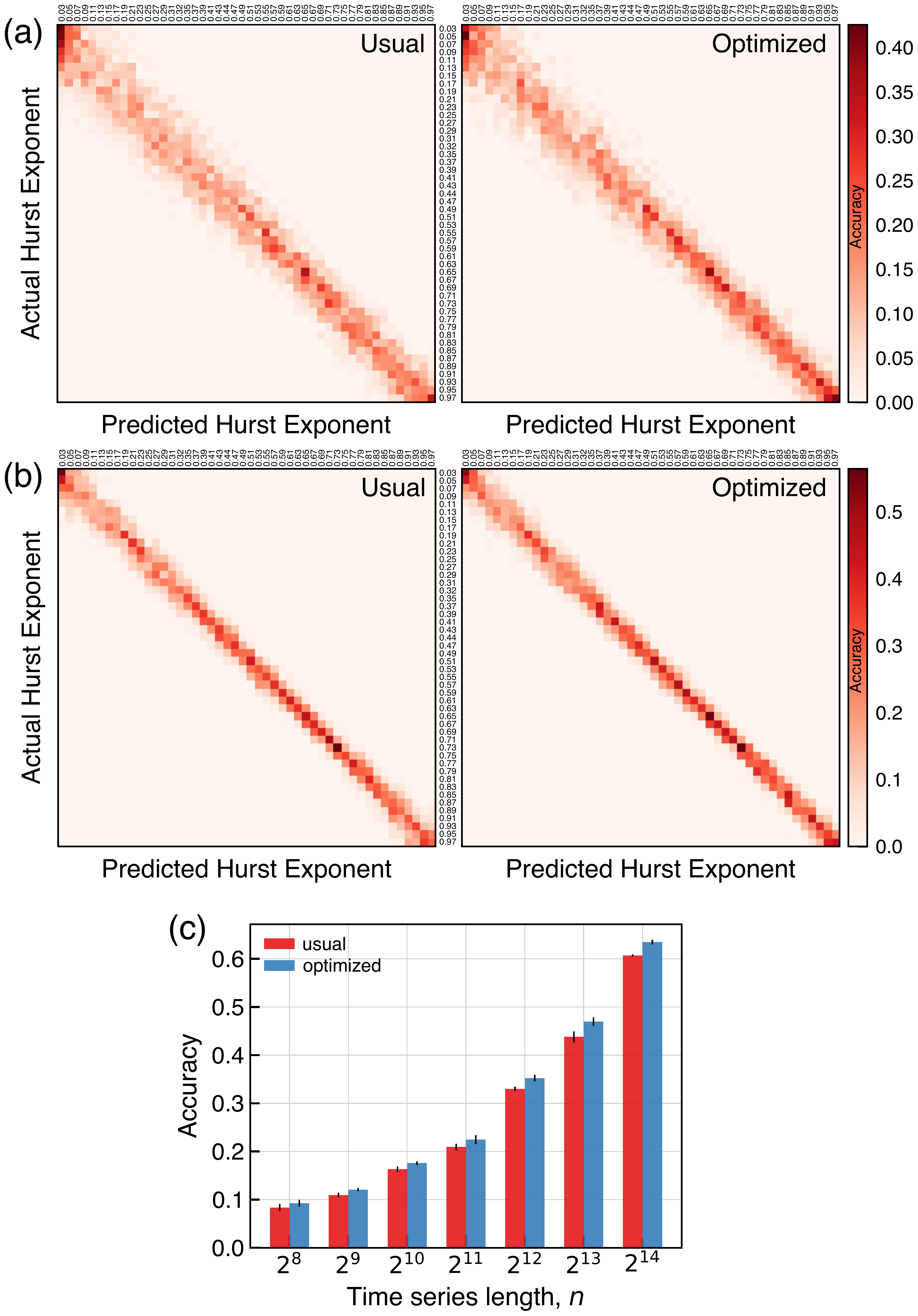}
\caption{
{Predicting the Hurst exponent based on the values of $H_q$ and $C_q$ via the nearest neighbors algorithm.} (a) Confusion matrices obtained from the nearest neighbors algorithm when considering the values of $H_q$ and $C_q$ for $q=1$ (usual case -- left panel) and the optimized values of $H_{q^*_H}$ and $C_{q^*_C}$ (optimized case -- right panel) for time series of length $n=2^{10}$. The rows of these matrices represent the actual Hurst exponents (the value employed in the simulation) and the columns represent the values predicted by the machine learning algorithm. Values of the Hurst exponents varies from $0.03$ to $0.97$ in steps of size $0.02$. The color code indicates the fraction of occurrences for each combination of actual and predicted Hurst exponent. In (b) we show the same analysis for longer times series ($n=2^{12}$). We note that in both classification scenarios the predicted values are always very close to the actual values (that is, the confusion matrices have ``diagonal stripes''). However, we further notice that the ``diagonal stripe'' is narrower in the optimized case (especially for persistent processes). (c) Overall accuracy (fraction of correctly classified Hurst exponents) for the two classification scenarios (error bars are the standard-errors) for different time series length. We notice that the use of $H_q$ and $C_q$ with $q=q^*_H$ and $q=q^*_C$ (optimized case) provides a greater overall accuracy when compared with the case $q=1$, regardless of the length $n$.
}
\label{fig:3}
\end{figure}

Figure~\ref{fig:3}(a) shows the confusion matrices (that is, the fractions of time series with a particular Hurst exponent that are classified with a given Hurst exponent -- accuracy) when considering $H_{1}$ and $C_{1}$ (usual causality plane, $q=1$ -- left panel) and the $H_{q_H^*}$ and $C_{q_C^*}$ (optimized causality plane -- right panel) for time series with $n=2^{10}$ terms. We observe that both matrices have non-zero elements only around the diagonal, indicating that the misclassified Hurst exponents are labeled with values close to actual values. We also note that the ``diagonal stripe'' of these matrices is narrow for the optimized causality plane (specially for $h>0.5$), showing that the optimized values enhance the performance of the classification task. Figure~\ref{fig:3}(b) shows the same analysis with longer time series ($n=2^{12}$), where we note the narrowing of the ``diagonal stripe'' of the confusion matrices. This happens because the variance in the estimated values of $H_{q}$ and $C_{q}$ decreases with the length $n$ of the time series, enhancing the performance of the classifiers in both scenarios. However, we still observe that the accuracy is larger when employing the optimized values $H_{q_H^*}$ and $C_{q_C^*}$. In fact, Fig.~\ref{fig:3}(c) shows that overall accuracy is always enhanced (regardless of $n$) when considering the optimized causality plane in comparison with usual causality plane.

\subsection{Harmonic noise}
For another numerical application, we consider times series generated from the harmonic noise~\cite{Geier1990}. This stochastic process is a generalization of the Ornstein-Uhlenbeck process~\cite{gardiner_handbook} and can be defined by the following system of Langevin equations~\cite{Geier1990}
\begin{equation}\label{eq:langevin_harm}
\begin{split}
\frac{dy}{dt} &= s\\
\frac{ds}{dt} &= - \Gamma s - \Omega^2 y + \sqrt{2 \varepsilon}\, \Omega^2 \xi(t)\,,
\end{split}
\end{equation}
where $\xi(t)$ is a Gaussian noise with zero mean, $\langle\xi(t)\rangle=0$, (here $\langle\dots\rangle$ stands for ensemble average) and uncorrelated, $\langle\xi(t)\xi(t')\rangle=\delta(t-t')$. That is, a harmonic oscillator driven by a white noise. This noise has an oscillating correlation function given by~\cite{Geier1990}
\begin{equation}\label{eq:harm_cor}
\begin{split}
\langle y(t) y(t+\tau)\rangle &= \frac{\varepsilon \Omega^2}{\Gamma} \exp\left(-\frac{\Gamma}{2}\tau\right)\\
&\times \left[\cos(\omega \tau) + \frac{\Gamma}{2\omega}\sin(\omega \tau)\right]\,,
\end{split}
\end{equation}
where 
\begin{equation}\label{eq:harm_omega_rel}
\omega = \sqrt{\Omega^2 - (\Gamma/2)^2}
\end{equation}
is the frequency of oscillation. In practical terms, this noise is a mixture of random and periodic behaviors. Notice that the Ornstein-Uhlenbeck process is recovered in the limit $\Omega\to\infty$ and $\Gamma\to\infty$, while the ratio $\Gamma/\Omega^2$ remains fixed (for this case, $\langle y(t) y(t+\tau)\rangle\sim \exp(-\tau)$). 

In order to produce time series from the harmonic noise, we integrate the system of equations~(\ref{eq:langevin_harm}) by using the Euler method with a step size $dt=10^{-3}$ (an approach that produces a good agreement between the exact correlation function of Eq.~\ref{eq:harm_cor} and the numerical results) up to maximum integration time of $1320$ and for particular values of parameters $\Gamma$, $\Omega$, and $\varepsilon$. In particular, we first investigate the role of the frequency $\omega$ on the form of the $q$-complexity-entropy curve. To do so, we choose $\varepsilon=1$, $\Gamma=0.05$, and several values of $\omega$ ranging from $1$ to $60$ ($\Omega$ is obtained from Eq.~\ref{eq:harm_omega_rel}). For these parameters, the shape of the correlation function is similar to an underdamped simple harmonic motion. Figure~\ref{fig:4} shows some $q$-complexity-entropy curves for the embedding dimension $d=3$. We note that all curves form loops and the broader the loop, the smaller the value of $\omega$. Also, the largest contrasts between the values of $\omega$ occur around the regions of minimum entropy and maximum complexity (as indicated by the insets). The curves are strongly overlapped for very small or very large values of $q$. We further observe that values of the complexity for $q=1$ (black dots in the first inset) practically do not change with $\omega$. 

\begin{figure}[!ht]
\centering
\includegraphics[scale=0.33]{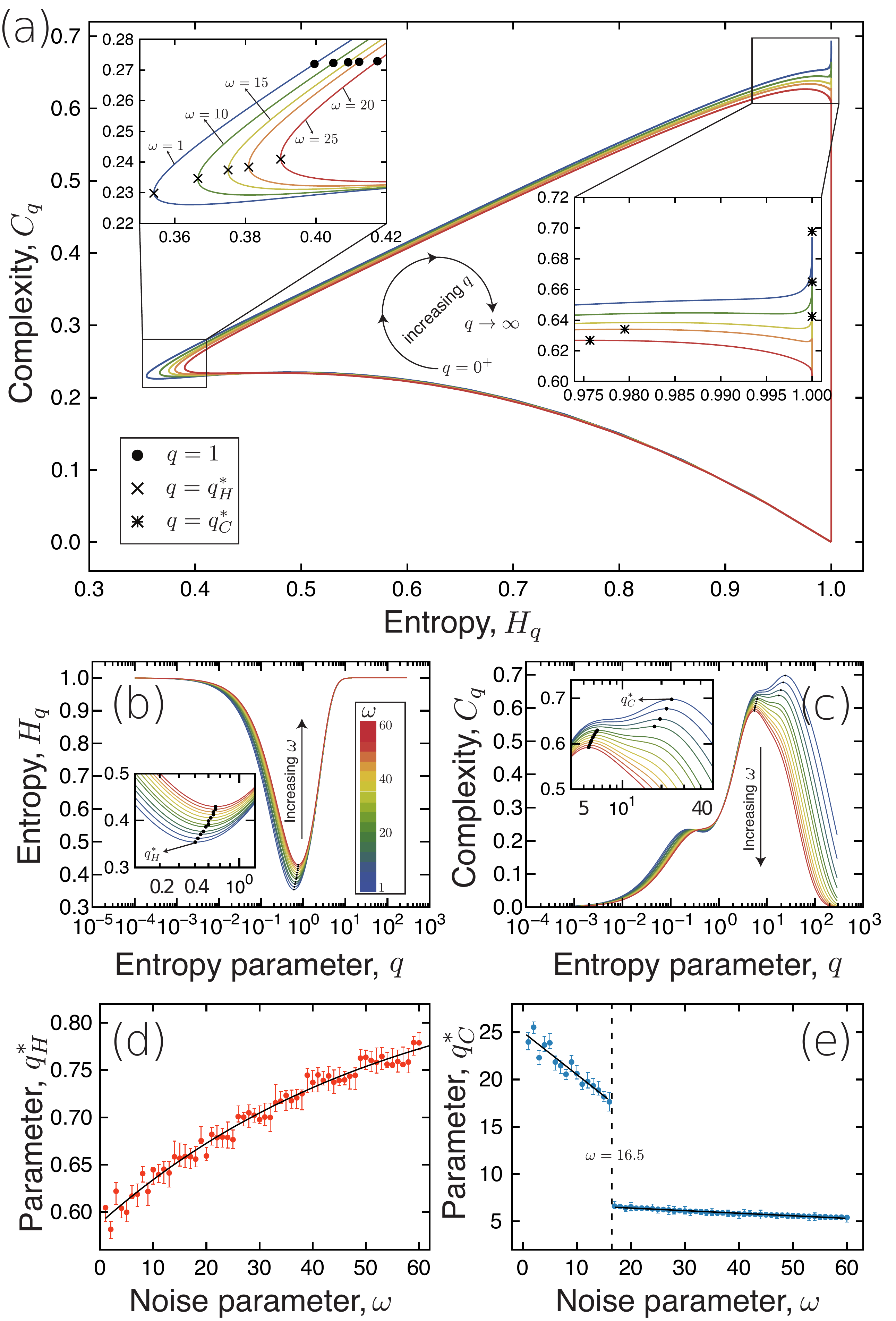}
\caption{
{Dependence of the entropy $H_q$ and complexity $C_q$ on the parameter $q$ for the harmonic noise: changes with the frequency parameter $\omega$.} Panel (a) shows the $q$-complexity-entropy curves with embedding dimension $d=3$, $\Gamma=0.05$, $\varepsilon=1$, and some values of the parameter $\omega$ (shown in the plot). The values of $q$ are increasing (from $q=0^+$ to $q=1000$ in size steps of $10^{-4}$) in the clockwise direction. The two insets highlight the regions of the causality plane where the entropy reaches a minimum ($q=q^*_H$, indicated by cross markers) and the complexity passes to its maximum value ($q=q^*_C$, indicated by asterisk markers). The points ($H_q$, $C_q$) for $q=1$ are indicated by black dots. Panel (b) shows the dependence of $H_q$ on $q$ for several values of $\omega$ (indicated by the color code) and the inset highlights the region where the minimums occur ($q=q^*_H$, indicated by black dots). Panel (c) shows the dependence of $C_q$ on $q$ for several values of $\omega$ (the same of panel b) and the inset highlights the region where the maximums occur ($q=q^*_C$, indicated by black dots). Panels (d) and (e) show the dependence of the extreme values of $q$ ($q^*_H$ and $q^*_C$) on the frequency parameter $\omega$ (the markers are the average values over one hundred realizations of the harmonic noise with maximum integration time of 1320 and step size of $10^{-3}$, and the error bars stand for 95\% bootstrap confidence intervals). We note that $q^*_H$ monotonically increases with $\omega$ in a relationship that is approximated by an exponential approach to the value $q^*_H=0.867$ (that is, $q^*_H=0.867-0.279 e^{-0.018 \omega}$, as indicated by the continuous line). We further notice that $q^*_C$ decreases with $\omega$ and that around $\omega=16.5$ there is a discontinuous behavior. The continuous lines in this last plot are linear approximations to the behavior of $q^*_C$.
}
\label{fig:4}
\end{figure}
\clearpage
Figures~\ref{fig:4}(b)~and~\ref{fig:4}(c) depict the individual behavior of $H_q$ and $C_q$ versus $q$ (now for more values of $\omega$), where the insets show the form of these curves around their extreme values (that are indicated by small dots). Finally, in Figs.~\ref{fig:4}(d)~and~\ref{fig:4}(e) we study the dependence of the extreme values $q_H^*$ and $q_C^*$ on the parameter $\omega$. We find that $q_H^*$ monotonically increases with $\omega$ in a relationship that can be approximated by an exponential approach to the value $q_H^*=0.867$. The shape of $q_C^*$ is more intriguing because it suddenly changes around the value $\omega\approx16.5$, a behavior that is similar to a phase transition in a bistable system.

We further investigate the shape of the $q$-complexity-entropy curves in a situation that is closer to a pure Ornstein-Uhlenbeck process, that is, a process with a correlation function that decays exponentially. For this, we fix $\omega=10^{-4}$ and choose different values for $\Gamma$, ranging from close to zero up to $10^4$ (again $\varepsilon=1$ and $\Omega$ is obtained from Eq.~\ref{eq:harm_omega_rel}). The small value of $\omega$ ensures that the oscillation period of the correlation function (Eq.~\ref{eq:harm_cor}) is much larger than the integration time. Figure~\ref{fig:5}(a) shows some $q$-complexity-entropy curves for the embedding dimension $d=3$ and different values of $\Gamma$ ranging from $0.55$ to $100$. These curves form loops whose broadness decreases as $\Gamma$ increases; in fact, the form of these curves is approaching a limit loop that is similar to the one observed for the fractional Brownian motion with $h=1/2$. Thus, the $q$-complexity-entropy curve can also distinguish among different time series with short-range correlations. Figures~\ref{fig:5}(b)~and~\ref{fig:5}(c) show the individual behavior of $H_q$ and $C_q$ versus $q$, where we find that the extreme values $q_H^*$ and $q_C^*$ depend on $\Gamma$, as illustrated in Figs.~\ref{fig:5}(d)~and~\ref{fig:5}(e). For small values of $\Gamma$, $q_H^*$ logarithmically increases with $\Gamma$ up to $\Gamma\approx300$, where it saturates around $q_H^*\approx1.12$. Similarly, $q_C^*$ logarithmically decreases with $\Gamma$ up to $\Gamma\approx1000$, where it saturates around $q_C^*\approx2.53$. These limit values for $q_H^*$ and $q_C^*$ are very close to those obtained for the fractional Brownian motion with $h=0.5$ [a random walk, see Figs.~\ref{fig:2}(a)~and~\ref{fig:2}(b)].

\begin{figure}[!ht]
\centering
\includegraphics[scale=0.33]{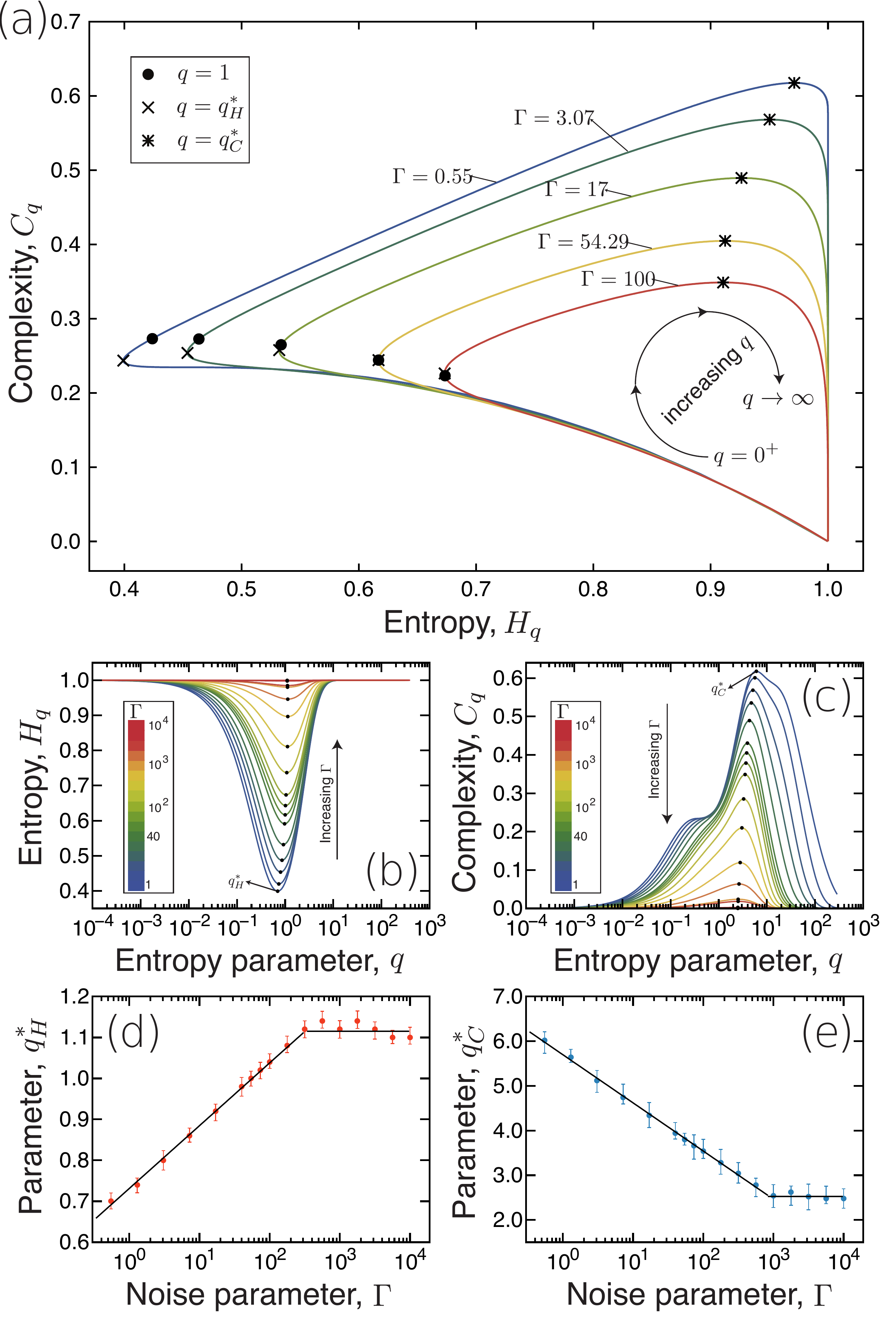}
\caption{
{Dependence of the entropy $H_q$ and complexity $C_q$ on the parameter $q$ for the harmonic noise: changes with the damping coefficient $\Gamma$.} Panel (a) shows $q$-complexity-entropy curves with embedding dimension $d=3$, $\omega=10^{-4}$, $\varepsilon=1$, and for some values of the parameter $\Gamma$ (shown in the plot). The values of $q$ are increasing (from $q=0^+$ to $q=1000$ in size steps of $10^{-4}$) in the clockwise direction. The black dots indicate the points ($H_q$, $C_q$) for $q=1$, cross markers for $q=q^*_H$, and asterisk markers for $q=q^*_C$. Panel (b) shows the dependence of $H_q$ on $q$ for several values of $\Gamma$ (indicated by the color code), where the black dots indicate the values of $q=q^*_H$ that minimize $H_q$. Panel (c) shows the dependence of $C_q$ on $q$ for several values of $\Gamma$ (the same of panel b), where the black dots indicate the values of $q=q^*_C$ that maximize $C_q$. Panels (d) and (e) show the dependence of the extreme values of $q$ ($q^*_H$ and $q^*_C$) on the damping parameter $\Gamma$ (the markers are the average values over one hundred realizations of the harmonic noise with maximum integration time of 1320 and step size of $10^{-3}$, and the error bars stand for 95\% bootstrap confidence intervals). We note that $q^*_H$ logarithmically increases with $\Gamma$ up to $\Gamma\approx300$ ($q^*_H = 0.73+0.07*\ln(\Gamma)$, as indicated by the continuous line), where it saturates around $q^*_H\approx1.12$ (continuous line). We further notice that $q^*_C$ logarithmically decreases with $\Gamma$ up to $\Gamma\approx1000$ ($q^*_C = 5.70-0.47*\ln(\Gamma)$, as indicated by the continuous line), where it saturates around $q^*_C\approx2.53$ (continuous line).
}
\label{fig:5}
\end{figure}

\subsection{Chaotic maps at fully developed chaos}
\begin{figure*}[!ht]
\centering
\includegraphics[scale=0.35]{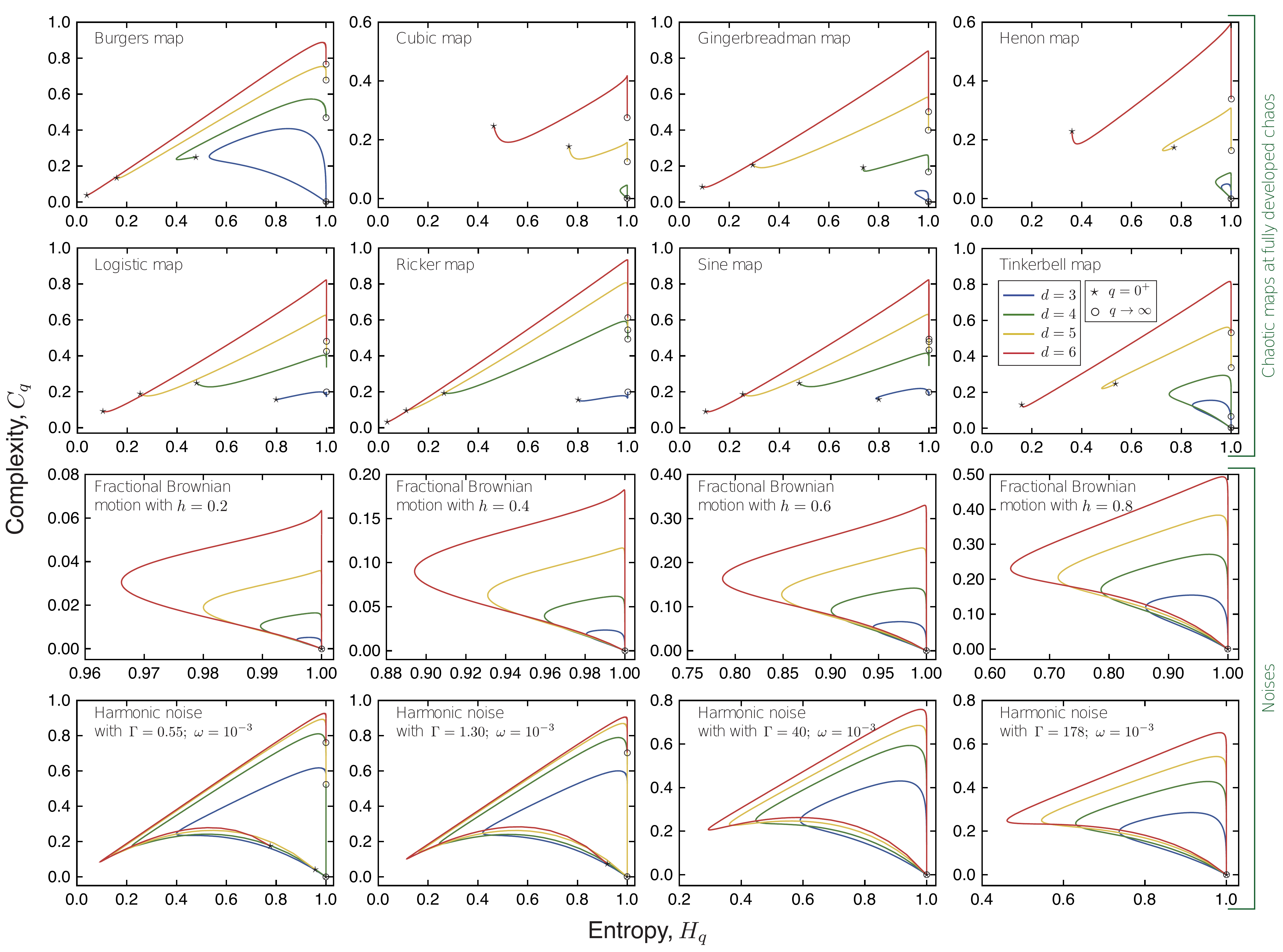}
\caption{
{Dependence of the entropy $H_q$ and complexity $C_q$ on the parameter $q$ for chaotic maps at fully developed chaos and stochastic processes.} Each plot shows the $q$-complexity-entropy curve for a chaotic map (first two rows) or a stochastic process (last two rows) for embedding dimensions $d=3,4,5,\text{and}~6$ (the different colors). The first two rows show the results for the chaotic maps: Burgers, cubic, Gingerbreadman, Henon, logistic, Ricker, sine and Tinkerbell at fully developed chaos (see Appendix~\ref{app:chaotic_maps} for more detail). The third row refers to the fractional Brownian motion (Hurst exponent $h$ is indicated in the plots) and the last row refers to the harmonic noise (parameters $\Gamma$ and $\omega$ are indicated in the plots). In each panel, the star markers indicate the points $(H_q,C_q)$ for $q=0^+$, while the open circles are the same for $q\to\infty$. We note that stochastic processes are mostly characterized by loops for all embedding dimensions, whereas chaotic maps usually form open curves in the causality plane. We further note that, differently from stochastic processes, $H_q$ of chaotic maps does not exhibit a minimum value for all embedding dimensions. 
}
\label{fig:6}
\end{figure*}

We now focus on analyzing the shape of the $q$-complexity-entropy curves for time series associated with chaotic processes. To do so, we generate time series by iterating eight chaotic maps: Burgers, cubic, Gingerbreadman, Henon, logistic, Ricker, sine, and Tinkerbell. We set their parameters to ensure a fully chaotic regime and iterate over $2^{17}+10^{4}$ steps, dropping the initial $10^{4}$ steps for avoiding transient behaviors. Also, for the two-dimensional maps (Burgers, Gingerbreadman, Henon and Tinkerbell), we have considered the squared sum of the two coordinates. The definition of each map and the parameters employed are given in the Appendix~\ref{app:chaotic_maps}. Figure~\ref{fig:6} shows the $q$-complexity-entropy curves for each map and for the embedding dimensions between $d=3$ and $d=6$. Differently from our previous results for noises, these curves do not form loops for all embedding dimensions, showing that there are permutations of $\pi_j$ that never appear in these time series. 

For comparison, we also show in Fig.~\ref{fig:6} some $q$-complexity-entropy curves for the fractional Brownian motion and harmonic noise, calculated from time series of the same length used for the chaotic maps. For the fractional Brownian motion, loops are observed for all values of $d$ and $h$; however, for the harmonic noise there are some open $q$-complexity-entropy curves (when $\omega=10^{-3}$ and $\Gamma=0.55$ or $\Gamma=1.30$), indicating that this noise presents forbidden permutations, even for time series of length $2^{17}$. As reported by Rosso~\textit{et al.}~\cite{Rosso2012,Rosso201242} and Carpi~\textit{et al.}~\cite{Carpi20102020} for the fractional Brownian motion, we expect the number of forbidden permutations to vanish with the length of the time series, and loops should appear for longer time series. For instance, we find that the curves for the harmonic noise shown in Fig.~\ref{fig:6} become loops for time series a thousand times longer, which does not happen for the chaotic maps. Thus, the shape of the $q$-complexity-entropy curve (closed or open) can be used as an indicative of chaos (open curves) or stochasticity (closed curves). Another characteristic that can distinguish between chaotic and stochastic time series is the existence of a minimum value for the normalized entropy $H_q$. We note that a minimum value exists in all time series from harmonic and fractional noise for $3\leq d\leq6$, which does not happen for the chaotic maps ($H_q\to1$ monotonically for most $d$ values).

\subsection{Logistic map}
Still on chaotic processes, we investigate the logistic map in more detail. This map is a quadratic recurrence equation defined as~\cite{may1976simple}
\begin{equation}
y_{k+1} = a\,y_k (1-y_k)\,,
\end{equation}
where $a$ is a parameter whose values of interest are in the interval $0\leq a\leq4$ (for which $0\leq y_k\leq1$). Depending on $a$, this map can exhibit simple periodic behavior (\textit{e.g.} $a=3.05$), stable cycles of period $m$ (\textit{e.g.} $m=4$ for $a=3.5$ and $m=8$ for $a=3.55$), and chaos (most values of $a>3.56994567\dots$ and $a=4$). 

This map is particularly interesting for our study because we can find the exact expression of the $q$-complexity-entropy curve when $d=3$ and $a=4$. Amig\'o~\textit{et al.}~\cite{Amigo2006,Amigo2007,Amigo2008,Amigo} have shown that the list $\{y_k,y_{k+1},y_{k+2}\}$ always corresponds to the ordinal pattern $\{0,1,2\}$ when $0<y_k<\frac{1}{4}$. Similarly, the ordinal pattern
$\{0,2,1\}$ occurs for $\frac{1}{4}<y_k<\frac{5-\sqrt{5}}{8}$,
$\{2,0,1\}$ for $\frac{5-\sqrt{5}}{8}<y_k<\frac{3}{4}$,
$\{1,0,2\}$ for $\frac{3}{4}<y_k<\frac{5+\sqrt{5}}{8}$,
$\{1,2,0\}$ for $\frac{5+\sqrt{5}}{8}<y_k<1$, 
and the ordinal pattern $\{2,1,0\}$ never appears. 
Combining these results with the fact that the probability distribution of $y_k$ is a beta distribution~\cite{jakobson1981absolutely}, $\rho(y)=[\pi \sqrt{y(1-y)}]^{-1}$, we can find the probability distribution $P=\{p_j(\pi_j)\}_{j=1,\ldots,d!}$ by integrating the beta distribution over each one of the previous intervals of $y_k$ (for instance, the probability associated with the pattern $\{0,1,2\}$ is $\int_0^{1/4}\rho(y)\,dy = 1/3$). These integrals yield $P=\{1/3, 1/15, 4/15, 2/15, 1/5, 0\}$ (in the same order that the intervals were presented), from which we build the exact form of the curve $(H_q(P),C_q(P))$ for $d=3$ and $a=4$. The left panels of Fig.~\ref{fig:7} show a comparison between the numerical results for a time series of length $2^{17}$ (after dropping the initial $10^{4}$ terms) and the exact form of the $q$-complexity-entropy curve, where an excellent agreement is observed. 

We further estimate the $q$-complexity-entropy curve for other values of $a$, as shown in Fig.~\ref{fig:7}(b) for $d=4$. In these plots, we choose values of $a$ for which the map oscillates between two values ($a=3.05$), four values ($a=3.50$), and for two chaotic regimes: one ($a=3.593$) close to the onset of chaos and another at fully developed chaos ($a=4$). We note that these different regimes of the logistic map correspond to different curves. However, the values of $H_1$ and $C_1$ alone are not enough for a complete discrimination; for instance, these values are practically the same for $a=3.50$ and $a=3.593$, while the values for $q\sim0$ are very different for these two regimes. 

\begin{figure}[!ht]
\centering
\includegraphics[scale=0.33]{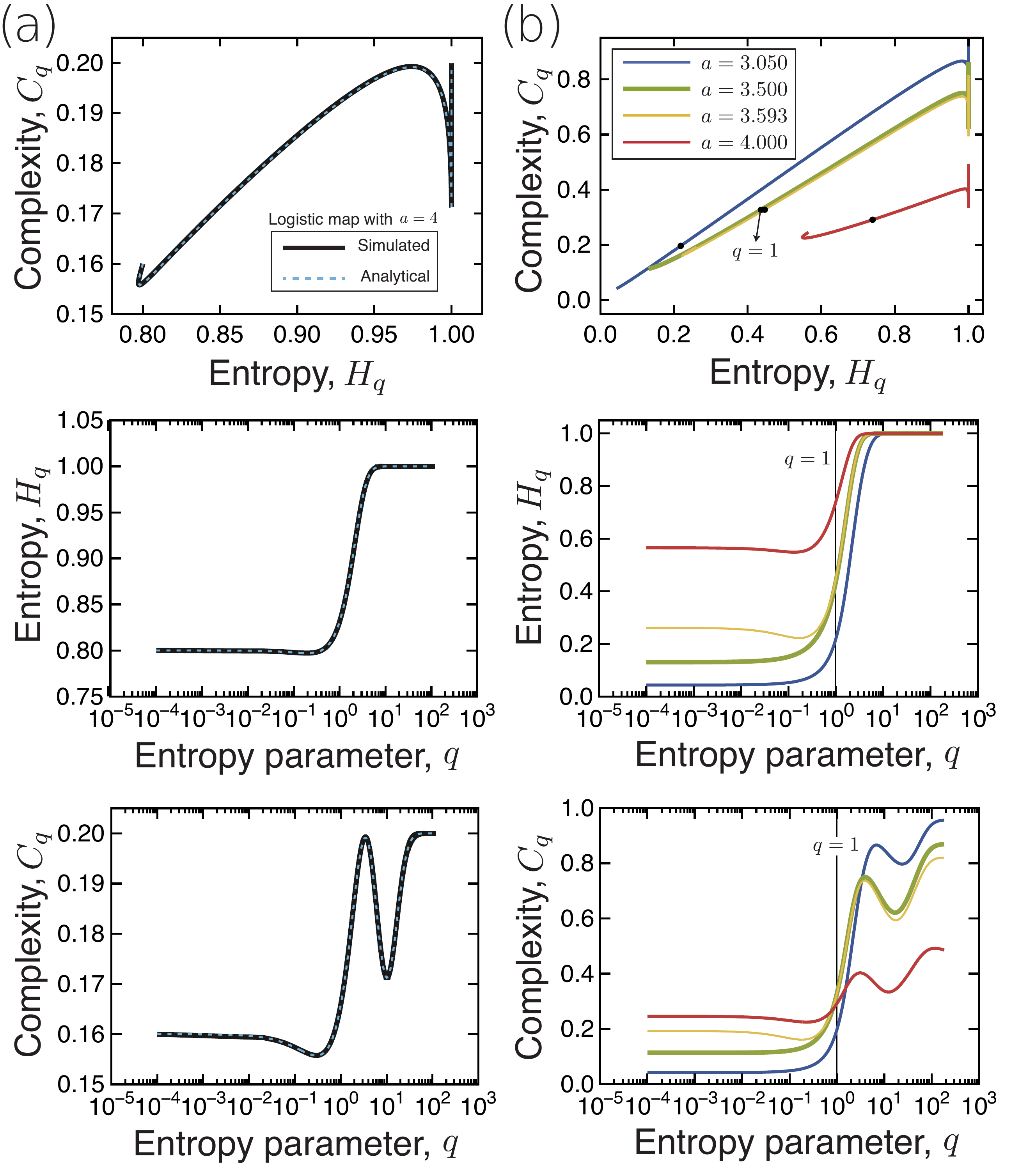}
\caption{
{Dependence of the entropy $H_q$ and complexity $C_q$ on the parameter $q$ for the logistic map.} Panel (a) shows a comparison between the values $H_q$ and $C_q$ (as well as their dependence on $q$) obtained from the simulations and the exact results for the logistic map with $a=4$ and $d=3$. Notice that a practically perfect agreement is found. Panel (b) shows the $q$-complexity-entropy curve and the dependence of $H_q$ and $C_q$ on $q$ for $d=4$ and four values of the parameter $a$: $a=3.05$ (oscillating behavior between two values), $a=3.50$ (oscillating behavior among four values), $a=3.593$ (chaotic behavior) and $a=4$ (fully developed chaos). We note that the complete differentiation among these regimes of the logistic map is only possible when considering different values of $q$. In particular, we observe that the points $(H_q,C_q)$ for $q=1$ (indicated by the black dots) are in about the same location for $a=3.50$ and $a=3.593$.
}
\label{fig:7}
\end{figure}

\section{Empirical applications}
\begin{figure}[!ht]
\centering
\includegraphics[scale=0.35]{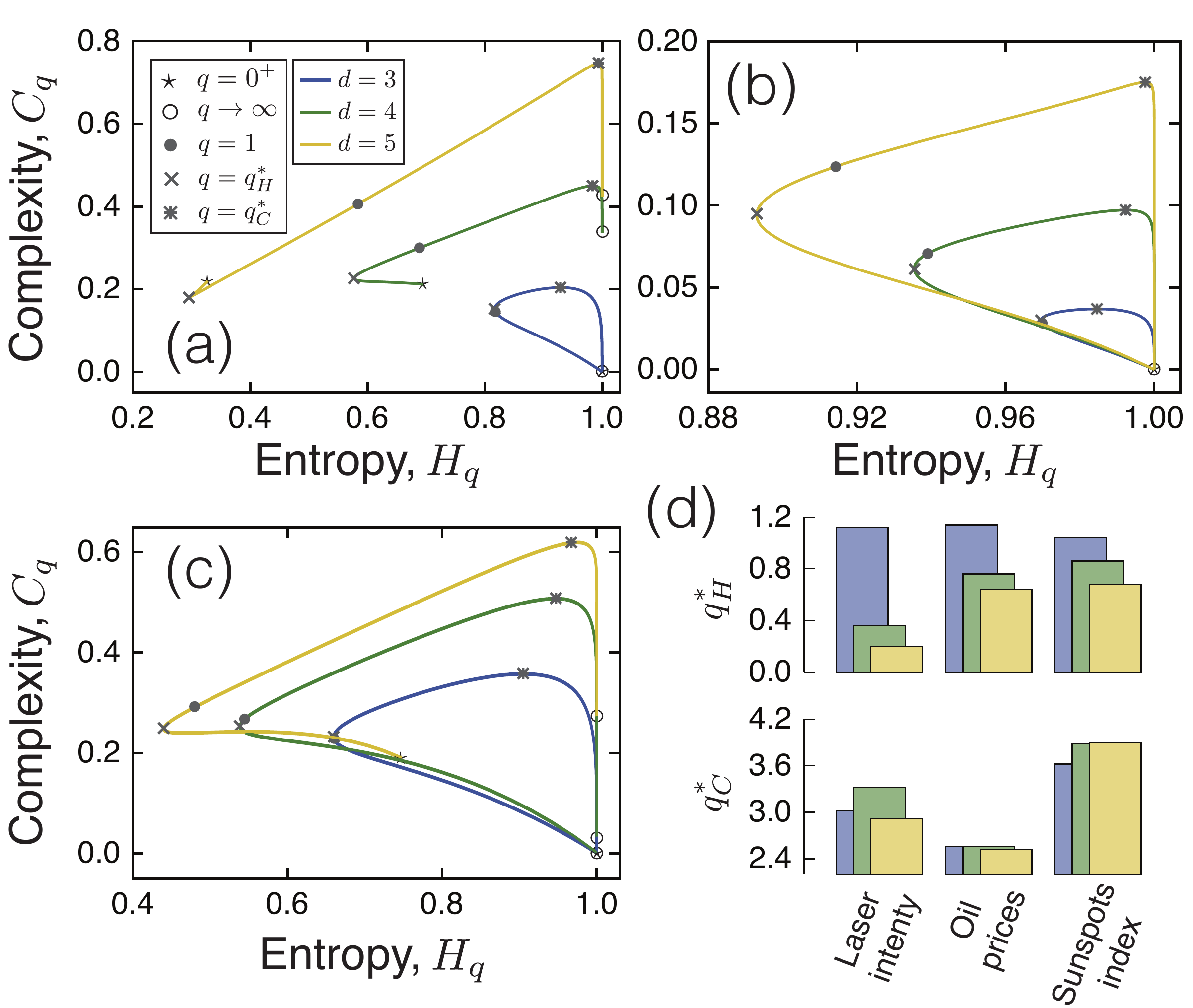}
\caption{
{Dependence of the entropy $H_q$ and complexity $C_q$ on the parameter $q$ for empirical time series.} Panel (a) shows the $q$-complexity-entropy curve for the chaotic intensity pulsations of a single-mode far-infrared NH3 laser. Panel (b) show the curves for crude oil prices (daily closing spot price of the West Texas Intermediate from January 2nd, 1986 to July 10th, 2012), and panel (c) for the monthly smoothed sunspots index (from 1749 to 2016). In all panels, the different colors refer to the embedding dimensions ($d=3,4,\text{and}~5$), the star markers indicate the points $(H_q,C_q)$ for $q=0^+$, while the open circles are the same for $q\to\infty$. Also, the gray dots indicate the points ($H_q$, $C_q$) for $q=1$, cross markers for $q=q^*_H$, and asterisk markers for $q=q^*_C$. We note that causality plane for the laser intensity is similar to those reported for chaotic maps, while the crude oil prices and sunspot index have a behavior similar to those reported for noisy time series (see Fig.~\ref{fig:6}). (d) Extreme values $q^*_H$ and $q^*_C$ obtained for each system and $d=3,4,\text{and}~5$ (different bar colors).
}
\label{fig:8}
\end{figure}

Another important test for the $q$-complexity-entropy curve is related to empirical time series. These time series usually have some degree of randomness only associated with the experimental technique employed to study a system, a feature that is well known to hinder the discrimination between experimental chaotic and stochastic signals~\cite{PhysRevE.86.046210}. Thus, in order to test the $q$-complexity-entropy curve in an experimental scenario, we first consider two empirical time series of well-known origin: the chaotic intensity pulsations of a laser~\cite{huebner1989dimensions} and the fluctuations of crude oil prices. The chaotic time series has length $n=9093$ and is freely available in Ref.~\cite{tspred}, whereas the crude oil prices refer to daily closing spot price of the West Texas Intermediate from January 2nd, 1986 to July 10th, 2012 (freely available in Ref.~\cite{eia}). The results are shown in Fig.~\ref{fig:8}(a)~and~\ref{fig:8}(b). We observe that the shape of the curve for the laser intensity is similar to those reported for chaotic maps, that is, it forms a loop only for $d=3$ (such as the Burgers map), while for higher embedding dimensions the curve is open. On the other hand, the curves for price time series form loops with a shape that resembles those of the fractional Brownian motion. We further study a time series of the monthly smoothed sunspot index, whose stochastic or chaotic nature is still debated~\cite{carbonell1993asymmetry,paluvs1999sunspot,timmer2000can,paluvs2000paluvs,mininni2000stochastic,mininni2002study,de2010fast}. By analyzing the 13-month smoothed monthly sunspot number from 1749 to 2016 ($n=3202$, freely available in Ref.~\cite{silso}), we have built the $q$-complexity-entropy curves shown in Fig.~\ref{fig:8}(c). We note that the curve is closed for $d=3$ and open for $d=4$ and $d=5$, showing a minimum value for $H_q$ for the three values of $d$; moreover, the shape of the curves are similar to those of the harmonic noise. Thus, our results suggest that the sunspot index can be described by an oscillatory behavior combined with irregularities of stochastic nature. A similar description was proposed by Mininni~\textit{et al.}~\cite{mininni2000stochastic,mininni2002study}, where a Van der Pol oscillator with a noise term was found to reproduce several features of the sunspot index. Figure~\ref{fig:8}(d) shows the values of $q$ that optimize $H_q$ and $C_q$ ($q^*_H$ and $q^*_C$) for each system. For $d=4$ and $d=5$, we note that $q^*_H$ is substantially smaller for the laser intensities than the values observed for the two other systems (which are very similar). This agrees with the results observed in Fig.~\ref{fig:6}, where we verified that $H_q$ does not have a minimum value for all embedding dimensions in the case of chaotic maps (which corresponds to $q^*_H\to0$). The price dynamics present the smallest values of $q^*_C$, followed by the laser intensities and the sunspots index (respectively), indicating that $q^*_H$ and $q^*_C$ are associated to different dynamical scales of these systems. 

Next, we test if the $q$-complexity-entropy curve can improve the discrimination of physiological signals of healthy subjects and patients with congestive heart failure. In particular, we investigate time series of the interbeat intervals from 46 healthy subjects (age~$=65.9\pm4.0$, $n=106235\pm10900$) and 15 patients (age~$=69.7\pm6.4$, $n=109031\pm12826$) with severe congestive heart failure (NYHA class III). All time series are made freely available by the PhysioNet web page~\cite{PhysioNet,PhysioNet2}. Figure~\ref{fig:9}(a) shows the average curves of all healthy subjects and patients, where loops are found for both conditions. However, the loop is broader for patients than for healthy subjects, which is compatible with the fact that the Hurst exponents of these time series are usually larger for patients than for healthy subjects~\cite{havlin1999application}. We also verify whether the values of $H_{q_H^*}$ and $C_{q_C^*}$, in comparison with $H_{1}$ and $C_{1}$, can enhance the differentiation among time series from healthy subjects and patients in a classification task. To do so, we proceed as in the fractional Brownian motion case, that is, we train a $k$-nearest neighbors algorithm in a $3$-fold cross-validation strategy. Our results show that optimized values provide a greater accuracy when compared with the usual values for $q=1$ ($\approx$80\% against $\approx$76\%), as shown in Fig.~\ref{fig:9}.

\begin{figure}[!ht]
\centering
\includegraphics[scale=0.33]{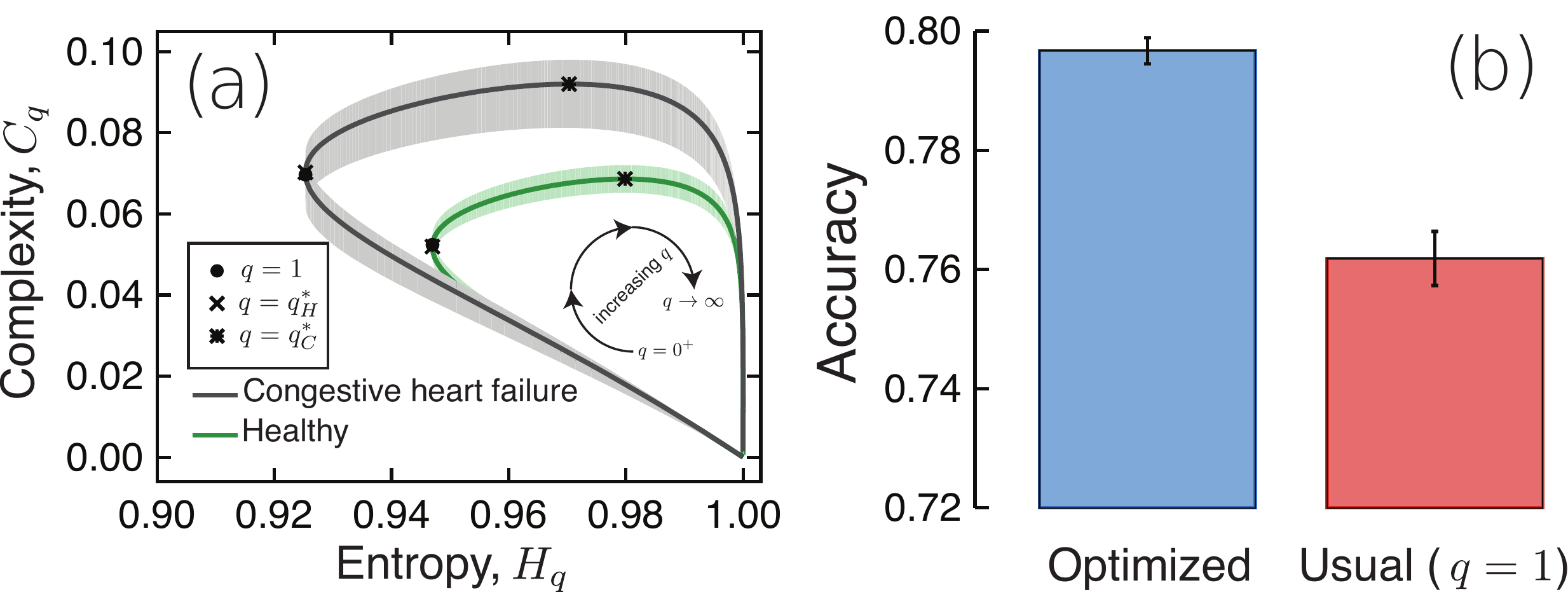}
\caption{
{Distinguishing between the interbeat intervals of healthy subjects and patients with congestive heart failure based on the values of $H_q$ and $C_q$ via the nearest neighbors algorithm.} Panel (a) shows the causality plane (for the embedding dimension $d=3$) evaluated from heart rate time series of 15 patients (age~$=69.7\pm6.4$) with severe congestive heart failure (NYHA class III, gray curve) and from the time series of 46 healthy subjects (age~$=65.9\pm4.0$, green curve). The continuous lines are the average values of $H_q$ and $C_q$ over all subjects in each group and the shaded areas are the standard error of the mean values. Panel (b) shows the accuracy of the nearest neighbors algorithm (fraction of correctly classified subjects) when employing the optimized values of $H_q$ and $C_q$ ($q=q^*_H$ and $q=q^*_C$, blue bar) and when using the values of $H_q$ and $C_q$ for $q=1$ (usual case, red bar). The error bars are 95\% confidence intervals calculated via cross-validation. We notice that the optimized values of $H_q$ and $C_q$ provide a greater accuracy when compared with the usual case ($\approx$80\% against $\approx$76\%).
}
\label{fig:9}
\end{figure}

\begin{figure}[!ht]
\centering
\includegraphics[scale=0.33]{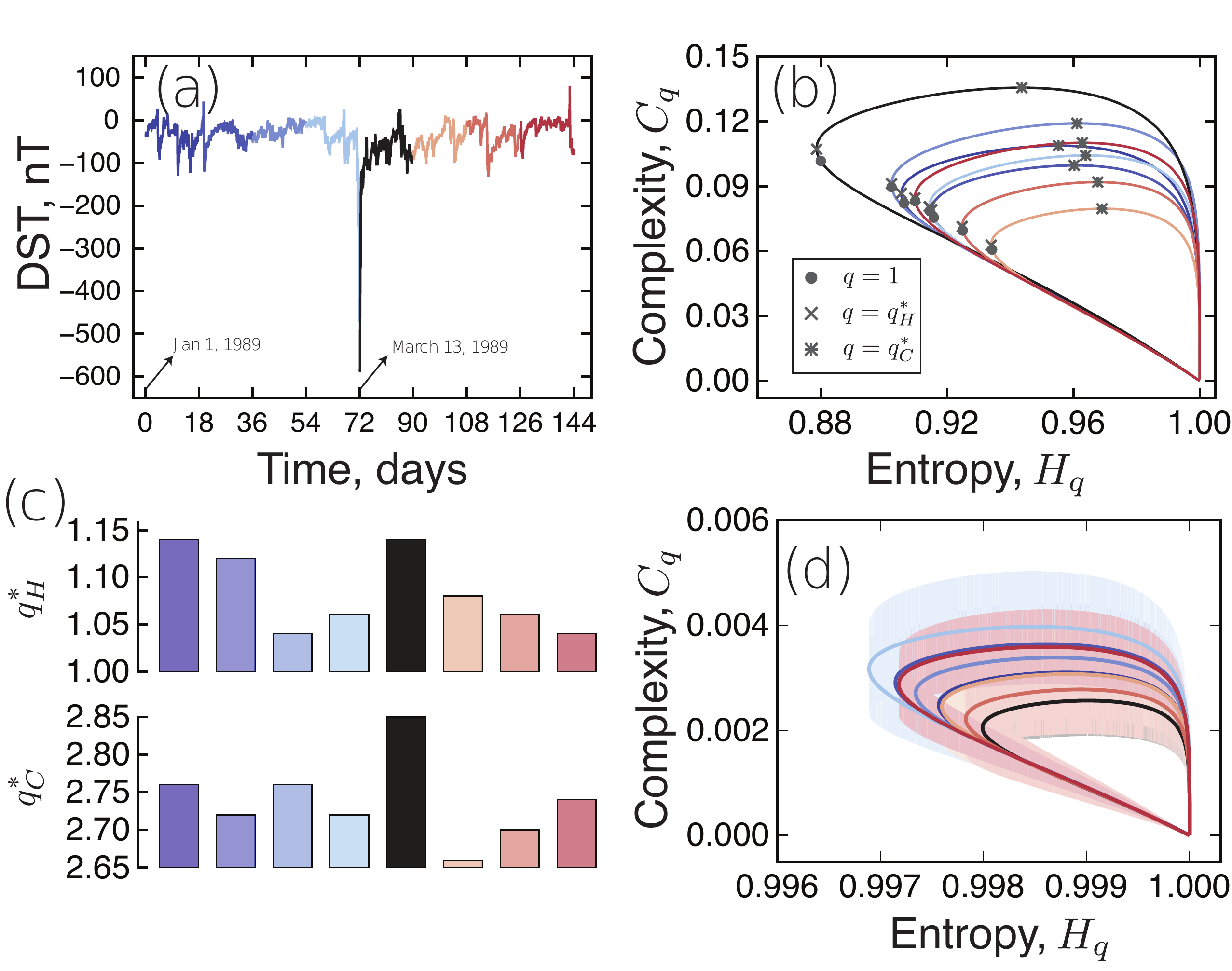}
\caption{
{Dependence of the entropy $H_q$ and complexity $C_q$ on the parameter $q$ for the Earth's magnetic activity: changes during a geomagnetic storm.} Panel (a) shows the hourly time series of the disturbance storm time index (DST, a measure of the Earth magnetic activity) from January 1st, 1989 to May 24th, 1989 (144 days). Within this period a severe geomagnetic storm struck Earth on March 13 1898, when the DST dropped to about $-600$nT. The time series is segmented in 8 periods (indicated by different colors) of 18 days and the period containing the geomagnetic storm is plotted in black. Panel (b) shows the $q$-complexity-entropy curves evaluated for each 18-days period for the embedding dimension $d=3$. The gray dots indicate the points ($H_q$, $C_q$) for $q=1$, cross markers for $q=q^*_H$, and asterisk markers for $q=q^*_C$. We notice that during the geomagnetic storm the $q$-complexity-entropy curve have the smallest value for the entropy $H_q$ and largest value for the complexity $C_q$ (that is, a broader loop). It is also worth mentioning that the period just after the storm is characterized by the shortest loop. (c) Extreme values $q^*_H$ and $q^*_C$ obtained for time series segment. (d) The $q$-complexity-entropy curves evaluated shuffled versions of each 18-days period. The continuous lines are the average value of curves over 100 realizations and the shaded areas indicate 95\% bootstrap confidence intervals. The color code employed in each plot is the same used for the DST time series.
}
\label{fig:10}
\end{figure}

Finally, as a last application, we consider a time series related to the Earth's magnetic activity: the disturbance storm time index (DST). This index reflects the average change in the Earth's magnetic field based on measurements of the equatorial ring current from a station network located along the equator on the Earth's surface. The injection of energetic ions from the solar wind into the ring current produces a magnetic field that (at the equator) is opposite to the Earth's field, often resulting in a sharp decreasing of the DST index and defining a geomagnetic storm~\cite{gonzalez1994geomagnetic}. Figure~\ref{fig:10}(a) shows the evolution of the DST index (hourly resolution) from January 1st, 1989 to May 24th, 1989 based on data freely available by the World Data Center for Geomagnetism~\cite{wdc}. During this period, a great geomagnetic storm occurred (the March 13th, 1989 geomagnetic storm~\cite{kappenman1997geomagnetic}), making the DST as lower as $-600$nT. In order to verify if the $q$-complexity-entropy curve can distinguish between the different regimes present in the DST index, we segment the data of Fig.~\ref{fig:10}(a) into time series of 18 days ($n=432$) and calculate the curves for each segment with $d=3$. Figure~\ref{fig:10}(b) shows that all curves are characterized by loops of different broadness. In particular, the period just after the beginning of the storm is characterized by the broadest loop, whereas the next data segment has the narrowest loop. We further note that, after the storm, the curve width is progressively restored to a shape similar to the one observed before the storm, reflecting the recovering dynamics of a geomagnetic storm~\cite{gonzalez1994geomagnetic}. We find that the values of $q$ that optimize $H_q$ ($q^*_H$) are close to $1$ [Fig.~\ref{fig:10}(c), upper panel] and that they are not efficient for identifying the geomagnetic storm. However, the values of $q$ that optimize $C_q$ ($q^*_C$) are very different from $1$ [Fig.~\ref{fig:10}(c), bottom panel] and capable of identifying the geomagnetic storm (notice that $q^*_C\approx 2.85$ during the geomagnetic storm, and $q^*_C<2.75$ in all other periods). We also study the $q$-complexity-entropy curves for shuffled versions of each time series segment, as shown in Fig.~\ref{fig:10}(d). After shuffling, the loops are very narrow, and no significant differences among the curves are observed. It is worth noting that the fluctuations in the values of $C_q$ are no much larger than $10^{-3}$. Assuming that the fluctuations in the original time series would have the same magnitude, this result suggests that the difference observed in Fig.~\ref{fig:10}(b) is statistically significant.

\section{Summary and Conclusions}
We have proposed an extension to the complexity-entropy causality plane of Rosso~\textit{et al.}~\cite{RossoLarrondoMartinPlastinoFuentes2007} by considering a mono-parametric generalization of the Shannon entropy (Tsallis $q$-entropy, $H_q$) and of the statistical complexity ($q$-complexity, $C_q$). Our approach for characterizing time series is based on the parametric representation of the ordered pairs $(H_q (P), C_q(P))$ on $q>0$, which we have called the $q$-complexity-entropy curve. In a series of applications involving numerically-generated and empirical time series, we have shown that the $q$-complexity-entropy curves can be very useful for characterizing and classifying time series, outperforming the original approach in several cases. In particular, the optimized version of the complexity-entropy causality plane (when using the values of $H_q^*$ and $C_q^*$) showed to be more efficient for classifying time series from the fractional Brownian motion and for distinguishing between healthy subjects and patients with congestive heart failure. These curves were also able to distinguish among different periodic behaviors of the logistic map as well as different parameters of the harmonic noise. Regarding the issue of distinguishing between chaotic and stochastic processes, we have shown that the $q$-complexity-entropy curves related to stochastic processes are usually characterized by loops, while chaotic processes display open curves, a feature that is associated with the existence of forbidden ordinal patterns in the time series.

Thus, we believe the $q$-complexity-entropy curves can be employed in a wide range of applications as a tool for characterizing time series. Naturally, other generalizations of the Shannon entropy could be employed in place of the Tsallis $q$-entropy, eventually leading to an efficient tool. One of these possibilities is the R\'{e}nyi $\alpha$-entropy~\cite{renyi}. For this case, it is possible to show that the normalized R\'{e}nyi $\alpha$-entropy (the analogous of Eq.~\ref{eq:tsallisentropynormal}) is a monotonically decreasing function of the entropic parameter $\alpha$; therefore, all $\alpha$-complexity-entropy curves will be open and the distinction between chaos and noise based on the formation of open or closed curves is not possible. This fact does not eliminate other features of the $\alpha$-complexity-entropy curves of being used for distinguishing chaos and noise as well as for classifying/characterizing time series. For instance, in a preliminary study we have observed that concavity properties of the $\alpha$-complexity-entropy curves can also be used for this task. However, a detailed study of other entropic forms and a comparison among them is outside the scope of the present work.

\acknowledgements
HVR thanks the financial support of the CNPq under Grant No. 440650/2014-3. MJ thanks the financial support of CAPES and CNPq. LZ acknowledges Consejo Nacional de Investigaciones Cient\'ificas y T\'ecnicas (CONICET), Argentina for financial support. EKL thanks the financial support of the CNPq under Grant No. 303642/2014-9.

\appendix
\section{Limiting expressions for $H_q$ and $C_q$ when $q\to0^+$ and $q\to\infty$}\label{app:Limiting_expression}

\begin{thm}
Let $P=\{p_j\}_{j=1,\ldots,d!}$ be a probability distribution, $r$ be the number of non-zero components of $P$ and $\gamma=\frac{r-1}{d!-1}$. The following statements are true:
\begin{enumerate}
\item[\upshape{(1)}] if $r=1$ then $H_q(P)=0$ and $C_q(P)=0$ for any $q>0$;
\item[\upshape{(2)}] $H_q(P)\to \gamma$ as $q\to 0^+$;
\item[\upshape{(3)}] $C_q(P)\to \gamma(1-\gamma)$ as $q\to 0^+$;
\item[\upshape{(4)}] if $r>1$ then $H_q(P)\to 1$ as $q\to\infty$;
\item[\upshape{(5)}] if $r>1$ then $C_q(P)\to 1-\gamma$ as $q\to\infty$.
\end{enumerate}
\end{thm}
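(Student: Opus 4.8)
The plan is to reduce all five statements to two closed-form expressions. Substituting $\log_q x=\frac{x^{1-q}-1}{1-q}$ into Eqs.~\ref{eq:tsallisentropy}--\ref{eq:tsallisentropynormal} (using $S_q(U)=\log_q d!$ and the convention $0^{q}=0$ for $q>0$) gives
\[
H_q(P)=\frac{\sum_{j:\,p_j\neq 0}p_j^{\,q}-1}{(d!)^{1-q}-1}\,,
\]
while substituting the same identity into Eq.~\ref{Dq} and writing $m_i=\tfrac{1}{2}\left(p_i+\tfrac{1}{d!}\right)$ for the components of $\tfrac{P+U}{2}$ gives, after using $\sum_i p_i=\sum_{i=1}^{d!}\tfrac{1}{d!}=1$,
\[
D_q(P,U)=\frac{1}{2(1-q)}\left[\,2-\sum_{i:\,p_i\neq 0}p_i^{\,q}\,m_i^{\,1-q}-(d!)^{-q}\sum_{i=1}^{d!}m_i^{\,1-q}\right]\,.
\]
Since $C_q(P)=D_q(P,U)\,H_q(P)/D_q^{*}$ with $D_q^{*}$ given explicitly by Eq.~\ref{Dq*}, statements (1), (3) and (5) will follow from statements (2) and (4) once the corresponding limits of the ratio $D_q(P,U)/D_q^{*}$ are known.

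Statement (1) is immediate: if $r=1$ the only nonzero $p_j$ equals $1$, so $\sum_{j}p_j^{\,q}=1$ and $H_q(P)=0$, hence $C_q(P)=0$ because $C_q$ is proportional to $H_q$. For statement (2) I let $q\to 0^{+}$ in the closed form for $H_q$: each of the $r$ nonzero terms $p_j^{\,q}$ tends to $1$ and $(d!)^{1-q}\to d!$, so $H_q(P)\to\frac{r-1}{d!-1}=\gamma$. For statement (4), if $r>1$ then $p_{\max}<1$ and $d!\ge 2>1$, so as $q\to\infty$ both $\sum_j p_j^{\,q}\to 0$ and $(d!)^{1-q}\to 0$, giving $H_q(P)\to\frac{0-1}{0-1}=1$.

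For statement (3) I pass to the limit $q\to 0^{+}$ in the closed form for $D_q(P,U)$: there $p_i^{\,q}\to 1$, $(d!)^{-q}\to 1$ and $m_i^{\,1-q}\to m_i$, and using $\sum_{i=1}^{d!}m_i=1$ together with $\sum_{i:\,p_i\neq 0}m_i=\frac{d!+r}{2d!}$ one obtains $D_q(P,U)\to\frac{d!-r}{4d!}$; the same limit applied to Eq.~\ref{Dq*} yields $D_q^{*}\to\frac{d!-1}{4d!}$. Hence $D_q(P,U)/D_q^{*}\to\frac{d!-r}{d!-1}=1-\gamma$, and multiplying by the limit $H_q(P)\to\gamma$ from statement~(2) gives $C_q(P)\to\gamma(1-\gamma)$.

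Statement~(5) is the delicate one, because as $q\to\infty$ both $D_q(P,U)$ and $D_q^{*}$ diverge and I must extract matching leading terms. The key observation is that every exponential factor that appears has base at most $2$: in $D_q(P,U)$ one has $p_i^{\,q}m_i^{\,1-q}=p_i\left(\tfrac{2p_i}{p_i+1/d!}\right)^{q-1}$ with base strictly below $2$, and $(d!)^{-q}m_i^{\,1-q}=\tfrac{1}{d!}\left(\tfrac{2}{d!\,p_i+1}\right)^{q-1}$ with base at most $2$, equality holding exactly when $p_i=0$; and rewriting Eq.~\ref{Dq*} shows that the dominant part of its numerator is proportional to $2^{q}$. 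If $r<d!$, the $d!-r$ zero components each contribute $\tfrac{1}{d!}2^{q-1}$ to the last sum, so $D_q(P,U)\sim\frac{(d!-r)2^{q}}{4d!(q-1)}$ and $D_q^{*}\sim\frac{(d!-1)2^{q}}{4d!(q-1)}$, whence $D_q(P,U)/D_q^{*}\to\frac{d!-r}{d!-1}=1-\gamma$; if $r=d!$ every base in $D_q(P,U)$ is strictly less than $2$ (and $D_q(P,U)\equiv 0$ when $P=U$), so $D_q(P,U)/D_q^{*}\to 0=1-\gamma$. Combining with $H_q(P)\to 1$ from statement~(4) yields $C_q(P)\to 1-\gamma$. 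The main obstacle is exactly this last argument: identifying which exponential term dominates, checking that the polynomial $1/(q-1)$ prefactors are common to numerator and denominator and cancel in the ratio, and treating the two cases $r<d!$ and $r=d!$ separately; everything else reduces to elementary limits once the two closed forms are established.
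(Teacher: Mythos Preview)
Your proof is correct and follows essentially the same route as the paper's: both reduce everything to explicit algebraic expressions obtained by substituting $\log_q x=\frac{x^{1-q}-1}{1-q}$, and both handle item~(5) by isolating the contributions of the zero components of $P$, which are precisely the terms whose effective exponential base equals $2$ and hence dominate $D_q(P,U)$ just as they dominate $D_q^{*}$. The only cosmetic difference is that you derive your two closed-form identities upfront and then read off each limit, whereas the paper works item by item (using $\log_q x\to x-1$ for $q\to 0^{+}$ and, for $q\to\infty$, explicitly factoring $2^{1-q}$ out of $D_q(P,U)/D_q^{*}$ so that the remaining bases $(1+1/(p_id!))$ and $(p_id!+1)$ are visibly at least $1$); the underlying computations and asymptotics are the same.
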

\begin{proof}
(1) This follows immediately from the definition of $H_q$ and $C_q$, given in Eqs.~(\ref{eq:tsallisentropynormal}) and~(\ref{eq:qstatcomplexity}). 

(2) For any $x>0$, it is clear that $\log_q x\to x-1$ as $q\to 0^+$. Using this fact in Eq.~(\ref{eq:tsallisentropynormal}), we obtain
\begin{equation}
\begin{split}
\lim_{q\to0^+}H_q(P)&=\frac{1}{d!-1}\sum_{\substack{i=1\\p_i\ne 0}}^{d!}p_i\dpar{\frac{1}{p_i}-1}\\
&=\frac{r-1}{d!-1}\,.
\end{split}
\end{equation}

(3) It follows immediately from Eq.~(\ref{Dq*}) that \mbox{$D_q^*\to\frac{d!-1}{4d!}$} as $q\to 0^+$. From Eq.~(\ref{Dq}) we have
\begin{equation}
\begin{split}
\lim_{q\to 0^+}D_q(P,U)&=-\frac{1}{2}\sum_{\substack{i=1\\p_i\ne 0}}^{d!}p_i\dpar{\frac{p_i+1/d!}{2p_i}-1}\\
&\quad-\frac{1}{2}\sum_{i=1}^{d!}\frac{1}{d!}\dpar{\frac{p_i+1/d!}{2/d!}-1}\\
&=-\frac{1}{4}\sum_{\substack{i=1\\p_i\ne 0}}^{d!}\dpar{\frac{1}{d!}-p_i}-\frac{1}{4}\sum_{i=1}^{d!}\dpar{p_i-\frac{1}{d!}}\\
&=\frac{1}{4}\dpar{1-\frac{r}{d!}}\,.
\end{split}
\end{equation}
Using these results and item (2), we obtain from Eq.~(\ref{eq:qstatcomplexity}) that
\begin{equation}
\begin{split}
\lim_{q\to 0^+}C_q(P)&=\dpar{\frac{4d!}{d!-1}}\dpar{\frac{d!-r}{4d!}}\dpar{\frac{r-1}{d!-1}}\\
&=\dpar{1-\frac{r-1}{d!-1}}\dpar{\frac{r-1}{d!-1}}\,.
\end{split}
\end{equation}

(4) For $q>1$, Eq.~(\ref{eq:tsallisentropynormal}) can be written as
\begin{equation}
H_q(P)=\sum_{i=1}^{d!}\frac{p_i-p_i^q}{1-(d!)^{1-q}}\,.
\end{equation}
Then, if $r>1$, $H_q(P)\to \sum_{i=1}^{d\,!} p_i=1$ as $q\to\infty$.

(5) We have from Eq.~(\ref{Dq*}) that
\begin{equation}
D_q^*=\frac{K_q}{(1-q)2^{2-q}}\,,
\end{equation}
where
\begin{equation}
K_q=\frac{2^{2-q}d!-(1+d!)^{1-q}-d!(1+1/d!)^{1-q}-d!+1}{d!}\,.
\end{equation}
We note immediately that $K_q\to (1-d!)/d!$ as $q\to\infty$. We have from Eq.~(\ref{Dq}) that, for $q>1$,
\begin{equation}
\begin{split}
\frac{D_q(P,U)}{D_q^*}&=-\frac{2^{1-q}}{K_q}\left[\sum_{\substack{i=1\\p_i\ne 0}}^{d!}p_i\dpar{\frac{1}{2}+\frac{1}{2p_id!}}^{1-q}\right.\\
&\quad\left.\vphantom{\sum_{\substack{i=1\\p_i\ne 0}}^{d!}}+\sum_{i=1}^{d!}\frac{1}{d!}\dpar{\frac{p_id!}{2}+\frac{1}{2}}^{1-q}-2\right]\\
&=-\frac{1}{K_q}\left[\sum_{\substack{i=1\\p_i\ne 0}}^{d!}p_i\dpar{1+\frac{1}{p_id!}}^{1-q}\right.\\
&\quad\left.\vphantom{\sum_{\substack{i=1\\p_i\ne 0}}^{d!}}+\sum_{i=1}^{d!}\frac{1}{d!}(p_id!+1)^{1-q}-2^{2-q}\right]\,.\\
\end{split}
\end{equation}
Hence, for $r>1$,
\begin{equation}
\begin{split}
\lim_{q\to\infty}\frac{D_q(P,U)}{D_q^*}&=\dpar{\frac{d!}{d!-1}}\dpar{\frac{d!-r}{d!}}\\
&=1-\frac{r-1}{d!-1}\,.
\end{split}
\end{equation}
Therefore, using these results and item (4) on Eq.~(\ref{eq:qstatcomplexity}), $C_q(P)\to 1-\gamma$ as $q\to\infty$ whenever $r>1$.
\end{proof}

\section{Ordinal probabilities for the fractional Brownian motion}\label{app:ordinal_probabilities}

By following the results of Bandt and Shiha~\cite{BandtShiha2007}, we can write the ordinal probabilities (for the embedding dimension $d=3$) of the fractional Brownian motion with Hurst exponent $h$ as:
\begin{eqnarray}
p(\{0,1,2\}) &=& \frac{\alpha}{2}\,, \nonumber\\
p(\{0,2,1\}) &=& \frac{1-\alpha}{4}\,, \nonumber\\
p(\{1,0,2\}) &=& \frac{1-\alpha}{4}\,, \nonumber\\
p(\{2,0,1\}) &=& \frac{1-\alpha}{4}\,, \nonumber\\
p(\{1,2,0\}) &=& \frac{1-\alpha}{4}\,, \nonumber\\
p(\{2,1,0\}) &=& \frac{\alpha}{2} \nonumber\,,
\end{eqnarray}
where
\begin{equation}
\alpha = \frac{2}{\pi} \arcsin(2^{h-1})\,.
\end{equation}
Similarly, for the embedding dimension $d=4$, we have~\cite{BandtShiha2007}:
\begin{eqnarray}
p(\{0,1,2,3\}) &=& \frac{1}{8}+\frac{1}{4 \pi}\left(\arcsin\alpha_1+2 \arcsin\alpha_2 \right), \nonumber\\
p(\{0,1,3,2\}) &=& \frac{1}{8}+\frac{1}{4 \pi}\left(\arcsin\alpha_7-\arcsin\alpha_1 -\arcsin\alpha_5\right), \nonumber\\
p(\{0,2,1,3\}) &=&\frac{1}{8}+\frac{1}{4 \pi}\left(\arcsin\alpha_4-2 \arcsin\alpha_5 \right), \nonumber\\
p(\{0,2,3,1\}) &=& \frac{1}{8}+\frac{1}{4 \pi}\left(\arcsin\alpha_3+\arcsin\alpha_8 -\arcsin\alpha_5\right), \nonumber\\
p(\{0,3,1,2\}) &=& \frac{1}{8}+\frac{1}{4 \pi}\left(\arcsin\alpha_7-\arcsin\alpha_4 -\arcsin\alpha_5\right), \nonumber\\
p(\{0,3,2,1\}) &=& \frac{1}{8}+\frac{1}{4 \pi}\left(\arcsin\alpha_6-\arcsin\alpha_8 +\arcsin\alpha_2\right), \nonumber\\
p(\{1,0,2,3\}) &=& p(\{0,1,3,2\}), \nonumber\\
p(\{1,0,3,2\}) &=& \frac{1}{8}+\frac{1}{4 \pi}\left(2 \arcsin\alpha_6+\arcsin\alpha_1 \right), \nonumber
\end{eqnarray}
\begin{eqnarray}
p(\{1,2,0,3\}) &=& p(\{0,3,1,2\}), \nonumber\\
p(\{1,2,3,0\}) &=& p(\{0,3,2,1\}), \nonumber\\
p(\{1,3,0,2\}) &=& p(\{0,2,3,1\}), \nonumber\\
p(\{1,3,2,0\}) &=& p(\{0,2,3,1\}), \nonumber\\
p(\{2,0,1,3\}) &=& p(\{0,2,3,1\}), \nonumber\\
p(\{2,0,3,1\}) &=& p(\{0,3,2,1\}), \nonumber\\
p(\{2,1,0,3\}) &=& p(\{0,3,2,1\}), \nonumber\\
p(\{2,1,3,0\}) &=& p(\{0,3,1,2\}), \nonumber\\
p(\{2,3,0,1\}) &=& p(\{1,0,3,2\}), \nonumber\\
p(\{2,3,1,0\}) &=& p(\{0,1,3,2\}), \nonumber\\
p(\{3,0,1,2\}) &=& p(\{0,3,2,1\}), \nonumber\\
p(\{3,0,2,1\}) &=& p(\{0,3,1,2\}), \nonumber\\
p(\{3,1,0,2\}) &=& p(\{0,2,3,1\}), \nonumber\\
p(\{3,1,2,0\}) &=& p(\{0,2,1,3\}), \nonumber\\
p(\{3,2,0,1\}) &=& p(\{0,1,3,2\}), \nonumber\\
p(\{3,2,1,0\}) &=& p(\{0,1,2,3\}),\nonumber
\end{eqnarray}
where
\begin{eqnarray}
\alpha_1 &=& \frac{1+3^{2 h}-2^{2 h +1}}{2},\nonumber\\
\alpha_2 &=& 2^{2 h -1}-1,\nonumber\\
\alpha_3 &=& \frac{1-3^{2 h}-2^{2 h}}{2\times6^{h} },\nonumber\\
\alpha_4 &=& \frac{3^{2 h}-1}{2^{2 h +1}},\nonumber\\
\alpha_5 &=& 2^{h -1},\nonumber\\
\alpha_6 &=& \frac{2^{2 h}-3^{2 h}-1}{2\times3^{h}},\nonumber\\
\alpha_7 &=& \frac{3^{2 h}-2^{2 h}-1}{2^{h+1}},\nonumber\\
\alpha_8 &=& \frac{2^{2 h}-1}{3^{h}}.\nonumber
\end{eqnarray}
By using these values, we find the exact form of the distribution $P=\{p(\pi_j)\}_{j=1,\ldots,d!}$ and the $q$-complexity-entropy curve $(H_q(P),C_q(P))$.

\section{Definition of the eight chaotic maps employed in our study}\label{app:chaotic_maps}

The Burgers map is defined as
\begin{equation*}
\begin{split}
x_{k+1}&= a x_{k} - y_{k}^2\\
y_{k+1}&= b y_{k} + x_{k} y_{k}
\end{split}\,,
\end{equation*}
and we have chosen $a=0.75$ and $b=1.75$. The time series that we have analyzed is $(x_{k}+y_{k})^2$ with $x_0=-0.1$ and $y_0=0.1$.

The cubic map is defined as
\begin{equation*}
\begin{split}
x_{k+1}&= a x_k(1-x_k^2)
\end{split}\,,
\end{equation*}
and we have chosen $a=3$ and $x_0=0.1$.

The Gingerbreadman map is defined as
\begin{equation*}
\begin{split}
x_{k+1}&= 1 - y_k + |x_k|\\
y_{k+1}&= y_{k}
\end{split}\,,
\end{equation*}
and we have chosen $x_0=0.5$ and $y_0=3.7$. The time series that we have analyzed is $(x_{k}+y_{k})^2$.

The logistic map is defined as
\begin{equation*}
\begin{split}
x_{k+1}&= a x_k(1-x_k)
\end{split}\,,
\end{equation*}
and we have chosen $a=4$ and $x_0=0.1$. 

The H\'enon map is defined as
\begin{equation*}
\begin{split}
x_{k+1}&= 1 - a x_k^2 + y_k\\
y_{k+1}&= b x_{k}
\end{split}\,,
\end{equation*}
and we have chosen $a=1.4$ and $b=0.3$. The time series that we have analyzed is $(x_{k}+y_{k})^2$ with $x_0=0$ and $y_0=0.9$.

The Ricker map is defined as
\begin{equation*}
\begin{split}
x_{k+1}&= a x_k\,{ }e^{-x_k} 
\end{split}\,,
\end{equation*}
and we have chosen $a=20$ and $x_0=0.1$. 

The sine map is defined as
\begin{equation*}
\begin{split}
x_{k+1}&= a \sin(\pi x_k)
\end{split}\,,
\end{equation*}
and we have chosen $a=1$ and $x_0=0.1$. 

The Tinkerbell map is defined as
\begin{equation*}
\begin{split}
x_{k+1}&= x_k^2 - y_k^2 + a x_k + b y_k \\
y_{k+1}&= 2x_k y_k + c x_k + d y_k
\end{split}\,,
\end{equation*}
and we have chosen $a=0.9$, $b=-0.6$, $c=2.0$, and $d=0.5$. The time series that we have analyzed is $(x_{k}+y_{k})^2$ with $x_0=-0.1$ and $y_0=0.1$. 

\bibliography{complex_plane_q}
\bibliographystyle{apsrev4-1}

\end{document}